\documentclass[a4paper,11pt]{amsart}
\usepackage{amsmath,amssymb,amsthm}
\usepackage{epsfig}
\usepackage{graphicx}

\usepackage[hmargin=2cm,vmargin=2.5cm]{geometry}

\usepackage{tikz-cd}

\makeatletter
\def\Ginclude@eps#1{%
 \message{<#1>}%
  \bgroup
  \def\@tempa{!}%
  \dimen@\Gin@req@width
  \dimen@ii.1bp%
  \divide\dimen@\dimen@ii
  \@tempdima\Gin@req@height
  \divide\@tempdima\dimen@ii
    \includegraphics{#1}%
  \egroup}
\makeatother

\renewcommand{\>}{\rangle}

\newcommand{\C}{\operatorname{\mathfrak{C}}}

\newcommand{\CP}{\mathbb{C}\mathbb{P}^1}

\newcommand{\Coef}{\operatorname{Coef}}

\newcommand{\Z}{\mathbb Z}
\newcommand{\mbC}{\mathbb C}
\newcommand{\mbP}{\mathbb P}

\newcommand{\mbZ}{\mathbb Z}

\newcommand{\eps}{\varepsilon}

\def\d{\partial}
\def\CP1{\mathbb{C}\mathbb{P}^1}

\newcommand{\oM}{\overline{\mathcal M}}

\newcommand{\og}{\overline g}
\newcommand{\oh}{\overline h}

\newcommand{\oG}{\overline G}

\def\oM{{\overline{\mathcal{M}}}}

\def\g{{\mathcal{G}}}

\def\Z{{\mathbb Z}}

\def\C{{\mathbb C}}
\def\Q{{\mathbb Q}}
\def\d{{\partial}}

\newcommand{\tC}{\widetilde C}
\newcommand{\Li}{\mathrm{Li}}
\newcommand{\mcS}{\mathcal S}
\newcommand{\tGamma}{\widetilde\Gamma}
\newcommand{\tgamma}{\widetilde\gamma}
\newcommand{\tp}{\widetilde p}


\newtheorem{theorem}{Theorem}[section]
\newtheorem{proposition}[theorem]{Proposition}

\newtheorem{lemma}[theorem]{Lemma}

\newtheorem{rem}[theorem]{Remark}
\newtheorem{ex}[theorem]{Example}
\newenvironment{remark}{\begin{rem}\rm}{\qee\end{rem}}

\newcommand{\qee}{\mbox{\hspace{0.2mm}}\hfill$\triangle$}

\numberwithin{equation}{section}

\title{Double ramification cycles and quantum integrable systems}

\author{Alexandr Buryak}
\address{A.~Buryak:\newline Department of Mathematics, ETH Zurich, \newline Ramistrasse 101 8092, HG G 27.1, Zurich, Switzerland}
\email{buryaksh\_at\_gmail.com}

\author{Paolo Rossi}
\address{P.~Rossi:\newline IMB, UMR 5584 CNRS, Universit\'e de Bourgogne,\newline 9, avenue Alain Savary, 21078 Dijon Cedex, France}
\email{paolo.rossi\_at\_u-bourgogne.fr}

\pagestyle{myheadings}

\begin{document}

\begin{abstract}
In this paper we define a quantization of the Double Ramification Hierarchies of \cite{Bur14} and \cite{BR14}, using intersection numbers of the double ramification cycle, the full Chern class of the Hodge bundle and psi-classes with a given cohomological field theory. We provide effective recursion formulae which determine the full quantum hierarchy starting from just one Hamiltonian, the one associated with the first descendant of the unit of the cohomological field theory only. We study various examples which provide, in very explicit form, new $(1+1)$-dimensional integrable quantum field theories whose classical limits are well-known integrable hierarchies such as KdV, Intermediate Long Wave, Extended Toda, etc. Finally we prove polynomiality in the ramification multiplicities of the integral of any tautological class over the double ramification cycle.
\end{abstract}

\maketitle

\tableofcontents

\section*{Introduction}

One of the main features of the algebraic setting of Symplectic Field Theory (SFT) \cite{EGH00} and its approach to the relation between integrable systems and moduli spaces of holomorphic curves is the appearance of infinite dimensional quantum integrable systems associated to higher genus curves, as opposed to the fact that, in the Dubrovin-Zhang construction \cite{DZ05} of integrable hierarchies from cohomological field theories, higher genus curves control the dispersive expansion of a still classical system.\\

Of course the target manifolds considered by the two theories are different. In the simplest algebraic setting, SFT coincides with relative Gromov-Witten theory of a trivial $\mbP^1$-bundle over, say, a closed K\"ahler manifold $X$, relative to the zero and infinity sections. The Dubrovin-Zhang (DZ) hierarchy is instead associated with Gromov-Witten theory of the base manifold $X$.  At genus $0$ the two theories recover the same classical dispersionless integrable hierarchy, a system of conservation laws (basically because on a genus $0$ curve with marked points, there is a, unique up to a $\mbC^*$-symmetry, meromorphic function with given divisor of zeros and poles supported on the marked points). In higher genus, however, the two theories differ, the first giving rise to a quantization and the second to a dispersive expansion of the starting dispersionless system.\\

More in general one can substitute the Gromov-Witten theory of the target space $X$ with a cohomological field theory (CohFT) on $\oM_{g,n}$ and the relative Gromov-Witten theory of the $\mbP^1$-bundle with its intersection theory with the double ramification cycle $DR_g(a_1,\ldots,a_n) \in H^{2g}(\oM_{g,n};\mathbb{Q})$. In this setting, in a recent paper \cite{Bur14}, the first author made a conjecture that the right classical hierarchy in the SFT construction to be compared with the full genus dispersive DZ-hierarchy is not the genus~$0$ SFT, but instead the \emph{double ramification (DR) hierarchy}, corresponding to intersecting the given CohFT with $DR_g(a_1,\ldots,a_n) \times \lambda_g$, where $\lambda_i$ is the $i$-th Chern class of the Hodge bundle on $\oM_{g,n}$. The conjecture states that the DR-hierarchy and the DZ-hierarchy associated to a given CohFT are equivalent through a Miura tranformation (see \cite{DZ05} for details) and has been checked in \cite{Bur14} and \cite{BR14} for various CohFTs (trivial CohFT, Hodge CohFT, Gromov-Witten theory of $\mbP^1$ and in part for Witten's $r$-spin classes). This offers a natural candidate for the construction of a quantization of the full dispersive DZ-hierarchy, something we call the \emph{quantum double ramification (qDR) hierarchy}: the intersection of the given CohFT with $DR_g(a_1,\ldots,a_n) \times \Lambda(\epsilon)$, where this time $\Lambda(\epsilon)=1+\sum_{i=1}^g \epsilon^i \lambda_i$.\\

In particular, if we denote by $\eps$ and $\hbar$ the dispersion and quantization parameters respectively and put $\epsilon=\frac{-\eps^2}{\hbar}$, we can summarize the situation in the following diagram:

\[
\begin{tikzcd}
\boxed{\text{DZ}_{\eps=0}=\text{DR}_{\eps=0}} \arrow{r}{\hbar} \arrow[swap]{d}{\eps} & \boxed{\text{qDR}_{\eps=0}} \arrow{d}{\eps} \\
\boxed{\text{DZ}\simeq\text{DR}} \arrow{r}{\hbar} & \boxed{\text{qDR}}
\end{tikzcd}
\]
As explained above, the equivalence in the lower left corner of the diagram is via a Miura transformation and is  still conjectural in general (although the accumulating evidence is quite strong).\\

The qDR hierarchy has the nature of an integrable quantum field theory in one space and one time dimensions. It contains $N$ bosonic fields $u^\alpha(x,t) =\sum_{k \in \Z} p^\alpha_k e^{i k x}$, $\alpha=1,\ldots,N$ defined (formally) on the circle, where the Fourier coefficients $p_k^\alpha$ are interpreted as creation and annihilation operators with the commutation rules $[p_k^\alpha,p_j^\beta]= i \hbar k \eta^{\alpha\beta} \delta_{k+j,0}$, where $\eta^{\alpha\beta}$ is a symmetric nondegenerate matrix. Its integrability is inherited by the classical limit and consists in an infinite family of commuting hamiltonian operators $\oG_{\beta,d}$, $\beta=1,\ldots,N$, $d=-1,0,1,\ldots$.\\

In the paper, after introducing the relevant deformation quantization of the standard hydrodynamic Poisson bracket of dispersionless DZ hierarchies and a study of its propagator (see \cite{DZ05}), we define the qDR hierarchy using intersection numbers of a given CohFT with the double ramification cycle and the Hodge and psi classes and prove commutativity of the (quantum) flows. We also prove a quantum version of the recursion relations from \cite{BR14} which allow to reconstruct the entire qDR hierarchy starting from $\oG_{1,1}$ alone, the hamiltonian operator associated to the first descendant of the unit in the CohFT.  Thanks also to this recursion we are able to study various examples, effectively computing the quantization of the KdV, Intermediate Long Wave and extended Toda hierarchies. For the dispersionless limit of the KdV hierarchy we are able to prove that an explicit generating function for the quantum Hamiltonians (which appeared in \cite{BSSZ12}, see aso \cite{Ros07}) satisfies indeed our recursion. Finally, in an Appendix, we prove that the integral of an arbitrary tautological class over the double ramification cycle $DR_g(a_1,\ldots,a_n)$ is a polynomial in the ramification multiplicities $a_1,\ldots,a_n$. This means, for instance, that for tautological CohFTs the qDR Hamiltonian densities are given by differential polynomials, a general assumption under which we work in this paper.

\vspace{0.5cm}

\noindent{\bf Acknowledgments.}\\
We would like to thank Boris Dubrovin, Rahul Pandharipande, Vladimir Rubtsov, Sergey Shadrin and Dimitri Zvonkine for useful discussions. P. R. was partially supported by a Chaire CNRS/Enseignement superieur 2012-2017 grant. A. B. was supported by grant ERC-2012-AdG-320368-MCSK in the group of R.~Pandharipande at ETH Zurich, by the Russian Federation Government grant no. 2010-220-01-077 (ag. no. 11.634.31.0005), the grants RFFI 13-01-00755 and NSh-4850.2012.1.

Part of the work was completed during the visit of A.B to the University of Burgundy in 2014 and during the visit of P.R. to the Forschungsinstitut f\"ur Mathematik at ETH Z\"urich in 2014.

\vspace{0.5cm}

\section{Deformation quantization of the standard hydrodynamic Poisson bracket}

In this section we describe a deformation quantization of a Poisson algebra which appears in multiple contexts. It is the natural Poisson structure arising from the algebriac setting of rational Symplectic Field Theory \cite{EGH00} (see for instance \cite{Ros09}), it is the normal form of any $(0,n)$-Poisson brackets of hydrodynamic type according to the classification of \cite{DZ05} and it is the relevant Poisson structure for the double ramification hierarchy associated to a given CohFT, according to \cite{Bur14}.\\

\subsection{The standard hydrodynamic Poisson bracket}
Referring the reader to \cite{DZ05,Bur14,BR14} for precise definitions, we recall that a function $f=f(u^\alpha,u^\alpha_x,u^\alpha_{xx},\ldots;\eps)$ is a differential polynomial in the jets $u^\alpha_k$, $\alpha=1,\ldots,N$, $k=0,1,2,\ldots$, where $u^\alpha = u^\alpha_0, u^\alpha_x=u^\alpha_1, \ldots$, if $f$ is a formal power series in $\eps$ with coefficients that are polynomials in $u^\alpha_k$, for $k>0$, and power series in $u^\alpha_0$. The degree of a differential polynomial is determined by setting $\deg u^\alpha_i = i$ and $\deg\eps = -1$. The space of local functionals is given by the quotient of the space of differential polynomials first by constants and then by the image of the formal $x$-derivative operator $\partial_x = \sum_{k\geq 0} u^\alpha_{k+1} \frac{\partial}{\partial u^\alpha_k}$ (we adhere, here and in the following, to the convention of sum over repeated greek indices, but not over latin ones). A local functional is usually represented with the symbol of an integral over the circle $\overline{f}=\int f dx$, to represent the fact that the equivalence class of $\partial_x f$ is zero. On the space of local functionals we have the \emph{standard hydrodynamic Poisson bracket} associated with a nondegenerate symmetric matrix~$\eta^{\alpha \beta}$:
$$
\{\overline{f}, \og \} := \int\left( \frac{\delta \overline{f}}{\delta u^\alpha} \eta^{\alpha \beta} \partial_x \frac{\delta \overline{g}}{\delta u^\beta} \right)dx,
$$
where $\frac{\delta \overline{f}}{\delta u^\alpha} := \sum_{k\geq 0} (-\partial_x)^k \frac{\partial f}{\partial u^\alpha_k}$. With respect to this Poisson bracket, the time evolution of a differential polynomial along the flow generated by a local functional $\og$, the Hamiltonian, is given by $\frac{\partial f}{\partial t} = \{f,\og\}:=\sum_{k\geq 0} \frac{\partial f}{\partial u^\alpha_k} \eta^{\alpha \beta} \partial_x^{k+1} \frac{\delta \overline{g}}{\delta u^\beta}$.\\

\subsection{Quantum commutator on local functionals}
In \cite{EGH00} the authors described a Weyl algebra formed by (power series in $\hbar$ with coefficients that are) power series in $p^\alpha_k$, $k\leq 0$, with coefficients that are polynomials in $p^\alpha_k$, $k>0$, with $\alpha=1,\ldots,N$. The product rule is described as follows: representing two power series in the ``normal form'', i.e. with all variables with negative or zero subscripts appearing on the left of all variables with positive subscripts,
$$
f=\sum_{g\geq 0} \sum_{n\geq 0}\ \sum_{k_1,\ldots, k_n \leq 0} p^{\alpha_1}_{k_1}\ldots p^{\alpha_n}_{k_n} f^{\alpha_1,\ldots,\alpha_n}_{k_1,\ldots,k_n;g}(p_{k>0})\hbar^g,
$$
$$
g=\sum_{g \geq 0} \sum_{n\geq 0}\ \sum_{k_1,\ldots, k_n\leq 0} p^{\alpha_1}_{k_1}\ldots p^{\alpha_n}_{k_n} g^{\alpha_1,\ldots,\alpha_n}_{k_1,\ldots,k_n;g}(p_{k>0}) \hbar^g,
$$
where $f^{\alpha_1,\ldots,\alpha_n}_{k_1,\ldots,k_n;g}(p_{>0})$ and $g^{\alpha_1,\ldots,\alpha_n}_{k_1,\ldots,k_n;g}(p_{>0})$ are polynomials, one obtains the product $f\star g$ by commuting the $p_{\leq 0}$ variables of $g$ with the $p_{k>0}$ variables of $f$ using $[p^\alpha_k,p^\beta_j]=i \hbar k \eta^{\alpha \beta}\delta_{k+j,0} $. Thanks to polynomiality of the coefficients, this process is well defined and produces another element of the same Weyl algebra.\\

This definition of the $\star$-product coincides with endowing 
$$
\C[p^1_{k>0},\ldots,p^N_{k>0}][[p^1_{k\leq 0},\ldots,p^N_{k\leq 0},\hbar]]
$$ with the ``normal ordering'' $\star$-product
$$
f \star g =f \left(  e^{\sum_{k>0} i \hbar k \eta^{\alpha \beta} \overleftarrow{\frac{\partial }{\partial p^\alpha_{k}}} \overrightarrow{\frac{\partial }{\partial p^\beta_{-k}}}}\right) g.
$$
The commutator is then defined consequently as $[f,g]:=f\star g - g \star f$.\\

We now want to describe how this $\star$-product is translated to the language of differential polynomials and local functionals. For the relevant definitions and notations we refer to \cite{DZ05}, but also to our previous papers \cite{Bur14,BR14}. We will need, first, to extend the space of differential polynomials to allow for dependence on the quantization formal parameter $\hbar$. In view of the results of Appendix \ref{subsection:polynomiality} we will make the following choice: a \emph{quantum differential polynomial} $f=f(u^\alpha,u^\alpha_x,\ldots;\eps,\hbar)$ is a formal power series in $\hbar$ and $\epsilon$ whose coefficients are polynomials in $u^\alpha_k$, for $k>0$, and power series in $u^\alpha_0$. The quantization parameter has degree $\deg\hbar=-2$. The space of \emph{quantum local functionals} is given, as in the classical case, by taking quotients with respect to constants and the image of the $\partial_x$-operator. Consider now a change of variables
$$
u^\alpha_j=\sum_{k\in\mbZ}(ik)^jp^\alpha_k e^{ikx},
$$
which allows to express any quantum differential polynomial $f=f(u^\alpha,u^\alpha_x,\ldots;\eps,\hbar)$ as a formal Fourier series in~$x$ with coefficients that are (power series in~$\eps$ with coefficients) in the above Weyl algebra. When needed, we will stress the dependence of~$f$ on the formal variable~$x$ by writing~$f(x)$. Note that
$$
\frac{\d}{\d p^\alpha_k}f(x)=\sum_{s\ge 0}(ik)^s e^{ikx}\frac{\d f}{\d u^\alpha_s}(x).
$$
Therefore, for any two differential polynomials $f$ and $g$ we have
\begin{equation*}
f(x)\star g(y) =\sum_{\substack{n\geq 0\\ r_1,\ldots,r_n\geq 0\\ s_1,\ldots , s_n\geq 0}} \frac{\hbar^{n}}{n!} \frac{\partial^n f}{\partial u^{\alpha_1}_{s_1}\ldots \partial u^{\alpha_n}_{s_n}}(x)\left( \prod_{k=1}^n (-1)^{r_k}  \eta^{\alpha_k\beta_k} \delta_+^{(r_k + s_k +1)}(x-y) \right)  \frac{\partial^n g}{\partial u^{\beta_1}_{r_1}\ldots \partial u^{\beta_n}_{r_n}}(y), 
\end{equation*}
where
$$
\delta_+^{(s)}(x-y):= \sum_{k\geq 0} (ik)^s e^{i k (x-y)}, \hspace{1cm} s\geq 0.
$$
The product of derivatives of the formal Fourier series $\delta_+(x-y)$ can be expressed as a linear combination of derivatives of the same object:
$$
\delta_+^{(a_1)}(x-y)\ \ldots \ \delta_+^{(a_n)}(x-y) =\sum_{j=1}^{n-1+\sum_{k=1}^n a_k}(-i)^{n-1} C_j^{a_1,\ldots,a_n}\ \delta^{(j)}_+(x-y),
$$
where $a_1,\ldots,a_n\geq 1$ and the $C_j^{a_1,\ldots,a_k}$ are rational numbers such that $C_j^{a_1,\ldots,a_n}=0$ unless $j=n-1+\sum_{k=1}^n a_k\ (\mathrm{mod}\ 2)$ (see Appendix~\ref{subsection:products} and Lemma~\ref{lemma:products of derivatives} there). For any specific choice of $a_1,\ldots,a_n$ it is not hard to compute the coefficients~$C^{a_1,\ldots,a_n}_j$. We also have a general formula for the top coefficient
$$
C^{a_1,\ldots,a_n}_{n-1+\sum_{k=1}^n a_k}= \frac{\prod_{k=1}^n a_k!}{(n-1+\sum_{k=1}^n a_k)!}
$$
Since the parities of the number of $x$-derivatives in the linear combination all agree, using the fact that
$$
\delta^{(s)}(x-y) = \delta_+^{(s)}(x-y) + \delta_-^{(s)}(x-y),
$$
where $\delta^{(s)}(x-y):=\sum_{k\in \Z} (ik)^s e^{ik(x-y)}$ is the $s$-th derivative of the formal periodic Dirac delta-function, $\delta^{(s)}_- (x-y):=\sum_{k< 0} (ik)^s e^{ik(x-y)}$ and $ \delta_+^{(s)}(x-y) = (-1)^s \delta^{(s)}_- (y-x)$, for $s>0$, we obtain 
\begin{equation}
\begin{split}
[f(x),g(y)]=\sum_{\substack{n\geq 1\\ r_1,\ldots ,r_n\geq 0\\ s_1,\ldots,s_n\geq 0}} \frac{(-i)^{n-1} \hbar^{n}}{n!}  &\frac{\partial^n f}{\partial u^{\alpha_1}_{s_1}\ldots \partial u^{\alpha_n}_{s_n}}(x)  (-1)^{\sum_{k=1}^n r_k}  \left( \prod_{k=1}^n \eta^{\alpha_k \beta_k}\right) \times\\
& \times \sum_{j=1}^{2n-1+\sum_{k=1}^n (s_k+r_k)} C_j^{s_1+r_1+1,\ldots,s_n+r_n+1}\delta^{(j)}(x-y)\frac{\partial^n g}{\partial u^{\beta_1}_{r_1}\ldots \partial u^{\beta_n}_{r_n}}(y).
\end{split}
\end{equation}
In particular, we get
\begin{equation}
\begin{split}
[f,\og]=\sum_{\substack{n\geq 1\\ r_1,\ldots ,r_n\geq 0\\ s_1,\ldots,s_n\geq 0}} \frac{(-i)^{n-1} \hbar^{n}}{n!}  &\frac{\partial^n f}{\partial u^{\alpha_1}_{s_1}\ldots \partial u^{\alpha_n}_{s_n}}  (-1)^{\sum_{k=1}^n r_k}  \left( \prod_{k=1}^n \eta^{\alpha_k \beta_k}\right) \times\\
& \times \sum_{j=1}^{2n-1+\sum_{k=1}^n (s_k+r_k)} C_j^{s_1+r_1+1,\ldots,s_n+r_n+1}  \partial_x^j  \frac{\partial^n g}{\partial u^{\beta_1}_{r_1}\ldots \partial u^{\beta_n}_{r_n}}.
\end{split}
\end{equation}
Notice how, for any quantum differential polynomial $f$ and quantum local functional $\og$, this formula gives a quantum differential polynomial of degree $\deg( [f,\og]) = \deg f + \deg \og - 1$.\\

Taking the classical limit of this expression one obtains $\left(\frac{1}{\hbar}[\overline{f},\og]\right)|_{\hbar=0}=\{\overline{f}|_{\hbar=0},\og|_{\hbar=0}\}$, i.e. the standard hydrodynamic Poisson bracket on the classical limit of the local functionals.\\

\section{Quantum double ramification hierarchy}
\subsection{Hamiltonian densities}
Given a cohomological field theory $c_{g,n}\colon V^{\otimes n} \to H^{even}(\oM_{g,n};\mbC)$, we define the hamiltonian densities of the \emph{quantum double ramification hierarchy} as the following generating series:
\begin{equation}\label{density}
\begin{split}
G_{\alpha,d}:=&\sum_{\substack{g\ge 0,n\ge 0\\2g-1+n>0}}\frac{(i \hbar)^g}{n!}\times\\
&\times\sum_{\substack{a_1,\ldots,a_n\in\mbZ\\ \alpha_1,\ldots,\alpha_n}}\left(\int_{DR_g\left(-\sum a_i,a_1,\ldots,a_n\right)}\Lambda\left(\frac{-\eps^2}{i \hbar}\right) \psi_1^d c_{g,n+1}\left(e_\alpha\otimes\otimes_{i=1}^n e_{\alpha_i}\right)\right)p^{\alpha_1}_{a_1}\ldots p^{\alpha_n}_{a_n}e^{ix\sum a_i},
\end{split}
\end{equation}
for $\alpha=1,\ldots,N$ and $d=0,1,2,\ldots$. Here $DR_g\left(a_1,\ldots,a_n\right) \in H^{2g}(\oM_{g,n};\Q)$ is the double ramification cycle, $\Lambda\left(\frac{-\eps^2}{i \hbar}\right):=\left(1+ \left( \frac{-\eps^2}{i \hbar}\right) \lambda_1+\ldots + \left(\frac{-\epsilon^2}{i\hbar}\right)^g \lambda_g \right)$, with $\lambda_i$ the $i$-th Chern class of the Hodge bundle and $\psi_i$ is the first Chern class of the tautological bundle at the $i$-th marked point. When needed, we will stress the dependence of $G_{\alpha,d}$ on the formal variable $x$ by writing $G_{\alpha,d}(x)$.\\

Our definition of quantum double ramification hierarchy might probably also be referred to as symplectic field theory hierarchy associated to the cohomological field theory 
$$
c_{g,n}\left(\otimes_{i=1}^n e_{\alpha_i}\right)\times\left(1+ \left( \frac{-\eps^2}{i \hbar}\right) \lambda_1+\ldots + \left(\frac{-\eps^2}{i\hbar}\right)^g \lambda_g \right),
$$ given that, when we take the integral with respect to $x$ of the above generating series and we take as cohomological field theory the Gromov-Witten theory of a closed target symplectic manifold $M$, we obtain indeed the definition of the SFT Hamiltonians \cite{EGH00} for the standard stable hamiltonian structure on $M\times S^1$ (see also \cite{FR10}).\\

As for the ``classical'' hamiltonian densities $g_{\alpha,p}=G_{\alpha,p}|_{\hbar=0}$ defined in \cite{BR14}, we would like to rewrite the above expression in terms of formal jet variables $u^\alpha_s = \sum_{k\in\mbZ} (ik)^s p^\alpha_k e^{ikx}$, $\alpha=1,\ldots,N$, $s=0,1,2,\ldots$. Working under the assumption that the double ramification cycle $DR_{g}(a_1,\ldots,a_n)$ is a non-homogeneous polynomial of degree at most~$2g$ in the variables $a_1,\ldots,a_n$ (which is compatible with the recent conjecture by Pixton on the double ramification cycle explicit form), we actually obtain that each $G_{\alpha,p}$ can be uniquely written as a quantum differential polynomial of degree $\deg G_{\alpha,p} \leq 0$. This means that the number of $x$-derivatives that can appear in the coefficient of $\eps^k\hbar^j$ is at most ~$k+2j$. In fact, it is proved  in Appendix \ref{subsection:polynomiality} that  integrals of tautological classes over the double ramification cycle $DR_{g}(a_1,\ldots,a_n)$ are indeed non-homogeneous polynomials of degree at most $2g$ in the variables $a_1,\ldots,a_n$. So, if the cohomological field theory we start with is tautological, this ensures that our densities are quantum differential polynomials of non-positive degree.

\begin{remark}
Cohomological field theories that consist of tautological classes form a very large class of cohomological field theories. In particular, all semisimple cohomological field theories and also cohomological field theories whose shift is semisimple belong to this class (see e.g.~\cite{PPZ15}).
\end{remark}

We finally add manually $N$ extra densities $G_{\alpha,-1}:=\eta_{\alpha\mu} u^\mu$. Recall that by $\oG_{\alpha,p}= \int G_{\alpha,p} dx$ we denote the coefficient of $e^{i0x}$ in $G_{\alpha,p}$ considered also up to a constant, for all $\alpha=1,\ldots,N$, $p=-1,0,1,\ldots$.

\subsection{Main Lemma}
In order to prove commutativity and recursion formulae for the quantum double ramification hierarchy one can proceed exactly as in \cite{Bur14} and \cite{BR14}, respectively. However here we will take a slightly different approach based on the following result from \cite{BSSZ12}. For a subset $I=\{i_1,i_2,\ldots\}, i_1<i_2<\ldots$, of the set $\{1,2,\ldots,n\}$ we will use the following notations:
$$
A_I:=(a_{i_1},a_{i_2},\ldots),\qquad a_I:=\sum_{i\in I}a_i.
$$
Suppose the set~$\{1,2,\ldots,n\}$ is divided into two disjoint subsets, $I\sqcup J=\{1,2,\ldots,n\}$, in such a way that $a_I>0$. Choose a list of positive integers~$k_1,\ldots,k_p$ such that $\sum_{i=1}^p k_i=a_I$. Let us denote by $DR_{g_1}(A_I,-k_1, \dots, -k_p)\boxtimes DR_{g_2}(A_J, k_1, \dots, k_p)$ the cycle in~$\oM_{g_1+g_2+p-1,n}$ obtained by gluing the two double ramification cycles at the marked points labeled by~$k_1,\ldots,k_p$.

\begin{theorem}[\cite{BSSZ12}] \label{BSSZ}
Let $t$ and $s$ be two different elements in $\{ 1, \dots, n \}$. Assume that both $a_s$ and $a_t$ are non-zero. Then we have
\begin{align*}
(a_s \psi_s - a_t \psi_t) DR_g&(a_1, \dots, a_n)=\\
=&\sum_{s\in I, t\in J}\sum_{p\geq 1}\sum_{g_1, g_2}\sum_{k_1,\dots,k_p}
\frac{\prod_{i=1}^p k_i}{p!}
DR_{g_1}(A_I,-k_1, \dots, -k_p)\boxtimes DR_{g_2}(A_J, k_1, \dots, k_p)\\
&-
\sum_{t \in I, s\in J}
\sum_{p \geq 1}\sum_{g_1, g_2} 
\sum_{k_1, \dots, k_p}
\frac{\prod_{i=1}^p k_i}{p!}
DR_{g_1}(A_I,-k_1, \dots, -k_p) \boxtimes DR_{g_2}(A_J, k_1, \dots, k_p).
\end{align*}
where the first sum is taken over all $I \sqcup J = \{ 1, \dots, n\}$ such that $a_I >0$; the third sum is over all non-negative genera $g_1$, $g_2$ satisfying $g_1 + g_2 + p-1= g$; the fourth sum is over the $p$-uplets of positive integers with total sum $a_I=-a_J$.
\end{theorem}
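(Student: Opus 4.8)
\emph{Proof proposal.} The plan is to realize the double ramification cycle as a pushforward from a moduli space of rubber relative stable maps and to deduce the identity from a comparison between the cotangent line classes at the source marked points and the geometry of the rubber target, followed by an identification of the resulting boundary terms with the $\boxtimes$-gluings in the statement.

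First I would fix the description $DR_g(a_1,\dots,a_n)=\pi_*[\oM_{g;a_1,\dots,a_n}(\mbP^1)^{\sim}]^{\mathrm{vir}}$, where $\oM_{g;a_1,\dots,a_n}(\mbP^1)^{\sim}$ is the space of genus~$g$ relative stable maps to a rubber chain of $\mbP^1$'s with relative divisors $0$ and $\infty$, the marked point $x_i$ mapping to $0$ with multiplicity $a_i$ when $a_i>0$ and to $\infty$ with multiplicity $-a_i$ when $a_i<0$, and $\pi$ the morphism to $\oM_{g,n}$ forgetting the map and stabilizing. On this rubber space one has the relative cotangent line classes $\psi^{(0)},\psi^{(\infty)}$ of the target at its two relative points, and for each marked point $x_i$ with $a_i\neq 0$ the map $f$ is locally $z\mapsto z^{a_i}$, so its differential identifies $\psi_i^{\otimes a_i}$ with the $f$-pullback of the target cotangent line at the image divisor, an isomorphism away from the locus where a target bubble tree sprouts between $x_i$ and the rest of the curve. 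This produces, in $A^1$ of the rubber space, a comparison of the form $a_i\psi_i=(\text{target cotangent class at the image divisor})-\mathsf{D}_i$, with $\mathsf{D}_i$ an explicit $\mathbb{Q}$-combination of target-degeneration boundary divisors weighted by the ramification orders along the resulting chain.

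Next I would subtract the comparisons for $s$ and $t$: since the target is rubber (its $\mathbb{C}^*$-scaling is quotiented out), the relative cotangent classes at $0$ and $\infty$ differ only by boundary divisors indexed by a splitting of the target chain into a lower and an upper part, so in $a_s\psi_s-a_t\psi_t$ the ``target'' contributions either cancel directly or telescope against these, leaving a $\mathbb{Q}$-combination of boundary divisors of the rubber space whose general point has a reducible target. Each such stratum parametrizes maps whose target has broken into two sub-chains — one carrying the $0$-divisor and $x_s$, the other carrying the $\infty$-divisor and $x_t$ — glued along $p\ge 1$ relative nodes, and pushing forward by $\pi$ such a stratum is exactly $DR_{g_1}(A_I,-k_1,\dots,-k_p)\boxtimes DR_{g_2}(A_J,k_1,\dots,k_p)$ with $s\in I$, $t\in J$, $a_I=-a_J>0$, $\sum k_i=a_I$; the strata separating $t$ to the $0$-side and $s$ to the $\infty$-side give the second sum with the opposite sign. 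The multiplicity $k_i$ of the $i$-th gluing node contributes a factor $k_i$ from smoothing that node (the standard relative/rubber gluing multiplicity), while dividing by $p!$ records that the gluing nodes are unordered, which yields precisely the coefficients $\frac{\prod_{i=1}^p k_i}{p!}$. Finally, applying $\pi_*$ together with the projection formula, $\pi_*[\oM^{\sim}]^{\mathrm{vir}}=DR_g(a_1,\dots,a_n)$, and the fact that the source cotangent lines are pulled back along $\pi$, turns the divisor identity on the rubber space into the asserted relation on $\oM_{g,n}$.

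The hard part will be the rigorous bookkeeping of the expanded/rubber boundary: one must control all strata where the target degenerates, check that the correction terms $\mathsf{D}_i$ in the $\psi$-comparison are supported exactly there with the correct multiplicities, carry out the telescoping of the target cotangent classes cleanly (tracking the sign that distinguishes the $0$-side from the $\infty$-side and thereby produces the two sums with opposite signs and the condition $a_I>0$), and confirm that the virtual fundamental class of the rubber space is compatible with the gluing morphisms, so that the pushforward of a boundary stratum is literally the $\boxtimes$ of two smaller double ramification cycles — this last point being where the factor $\prod_{i=1}^p k_i$ genuinely originates.
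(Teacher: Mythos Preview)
The present paper does not supply a proof of this theorem: it is quoted verbatim from \cite{BSSZ12} and used as input for Lemma~\ref{mainlemma}. So there is no ``paper's own proof'' to compare against here.

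That said, your outline is essentially the strategy carried out in the cited reference. There the double ramification cycle is indeed realized as the pushforward of the virtual class of the rubber relative stable maps space $\oM_{g;a_1,\ldots,a_n}(\mbP^1)^{\sim}$; for each $i$ with $a_i\ne 0$ one has the comparison $a_i\psi_i=\Psi_\bullet+(\text{boundary})$ with the target cotangent class at the corresponding relative divisor; subtracting the two comparisons for $s$ and $t$ leaves only target-degeneration boundary strata, and the pushforward of each such stratum to $\oM_{g,n}$ is the glued cycle $DR_{g_1}\boxtimes DR_{g_2}$ with the multiplicity $\prod k_i/p!$ coming from the standard degeneration/gluing formula for relative virtual classes. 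Your identification of the sign (which side of the broken target carries $s$ versus $t$) with the two sums and the condition $a_I>0$ is also the correct bookkeeping.

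Two places where your sketch would need to be tightened to become an actual proof: first, the ``telescoping'' of the target cotangent contributions depends on the signs of $a_s$ and $a_t$ --- if they have the same sign they both compare to the \emph{same} target class $\Psi_0$ (or $\Psi_\infty$) and the cancellation is immediate, whereas if they have opposite signs one must use the rubber relation $\Psi_0=\Psi_\infty$ (which holds because the $\mbC^*$-quotient trivializes the difference), and you should make this case distinction explicit. Second, the statement that the pushforward of a rubber boundary stratum is \emph{exactly} a $\boxtimes$ of smaller DR cycles, with no excess, requires the compatibility of the relative virtual class with the gluing maps; this is standard but is the substantive input, and you correctly flag it as ``the hard part''.
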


We now define the following generating function for intersection numbers involving the insertion of psi-classes at two marked points:
\begin{equation*}\label{density2}
\begin{split}
G_{\alpha,p;\beta,q}(x,y):=\sum_{\substack{g\ge 0,n\ge 0\\2g+n>0}}\frac{(i\hbar)^g}{n!} \sum_{\substack{a_0,\ldots,a_{n+1}\in\mbZ\\\sum a_i=0\\ \alpha_1,\ldots,\alpha_n}}&\left(\int_{DR_g\left(a_0,a_1,\ldots,a_n,a_{n+1}\right)}\right. \Lambda\left(\frac{-\eps^2}{i \hbar}\right) \psi_0^p \psi_{n+1}^q \times \\
&\times c_{g,n+2}\left(e_\alpha\otimes\otimes_{i=1}^n e_{\alpha_i}\otimes e_\beta\right)\Bigg) p^{\alpha_1}_{a_1}\ldots p^{\alpha_n}_{a_n}e^{-i a_0 x-i a_{n+1} y},
\end{split}
\end{equation*}
for $\alpha,\beta=1,\ldots,N$ and $p,q=0,1,2,\ldots$. Then we have the following
\begin{lemma}\label{mainlemma}
For all  $\alpha,\beta=1,\ldots,N$ and $p,q=0,1,2,\ldots$, we have
\begin{equation}\label{eqmainlemma}
\partial_x G_{\alpha,p+1;\beta,q}(x,y) - \partial_y G_{\alpha,p;\beta,q+1}(x,y) =\frac{1}{\hbar} \left[ G_{\alpha,p}(x) , G_{\beta,q}(y)\right]
\end{equation}
\end{lemma}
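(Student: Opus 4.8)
I would deduce \eqref{eqmainlemma} from Theorem~\ref{BSSZ} by rephrasing everything in terms of generating series; since both sides are formal power series in $\hbar$, $\eps$ and the variables $p^{\alpha_i}_{a_i}$, it suffices to compare them coefficient by coefficient (note that the commutator begins in order $\hbar$, so $\frac1\hbar[G_{\alpha,p}(x),G_{\beta,q}(y)]$ makes sense). First I would rewrite the left-hand side: since $\partial_x$ acts on the factor $e^{-ia_0x-ia_{n+1}y}$ in the definition of $G_{\alpha,p+1;\beta,q}$ by multiplication by $-ia_0$ and $\partial_y$ by $-ia_{n+1}$, one gets at once
\[
\partial_x G_{\alpha,p+1;\beta,q}(x,y)-\partial_y G_{\alpha,p;\beta,q+1}(x,y)=-i\sum_{g,n}\frac{(i\hbar)^g}{n!}\sum_{\substack{\sum a_i=0\\ \alpha_1,\dots,\alpha_n}}\left(\int_{DR_g(a_0,\dots,a_{n+1})}\psi_0^p\psi_{n+1}^q(a_0\psi_0-a_{n+1}\psi_{n+1})\,\Lambda\, c_{g,n+2}\right)\prod_{i=1}^n p^{\alpha_i}_{a_i}\,e^{-ia_0x-ia_{n+1}y},
\]
the arguments of $c_{g,n+2}$ being suppressed. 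So, up to the scalar $-i$, the left-hand side is the generating series built from the class $(a_0\psi_0-a_{n+1}\psi_{n+1})DR_g(a_0,\dots,a_{n+1})$ decorated by $\psi_0^p\psi_{n+1}^q\Lambda$.

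Next I would apply Theorem~\ref{BSSZ} with $s$ the marked point carrying $e_\alpha$ (label $0$, multiplicity $a_0$) and $t$ the one carrying $e_\beta$ (label $n+1$, multiplicity $a_{n+1}$); this is legitimate whenever $a_0,a_{n+1}\neq 0$, and I would dispose separately of the degenerate contributions with $a_0=0$ or $a_{n+1}=0$, which fall outside the statement of the theorem but require only a minor extra argument. Theorem~\ref{BSSZ} replaces $(a_0\psi_0-a_{n+1}\psi_{n+1})DR_g$ by a sum over splittings $I\sqcup J=\{0,\dots,n+1\}$ with $0\in I,\ n+1\in J$ (first sum, sign $+$) or with the two distinguished points exchanged (second sum, sign $-$), over genera $g_1+g_2+p-1=g$ and over $p$-tuples $k_1,\dots,k_p\geq 1$ with $\sum_i k_i=a_I$, each summand weighted by $\frac{\prod_i k_i}{p!}$ and given by the glued cycle $DR_{g_1}(A_I,-k_1,\dots,-k_p)\boxtimes DR_{g_2}(A_J,k_1,\dots,k_p)$.

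The heart of the argument is to feed this back into the generating series and recognize the commutator. On such a glued boundary stratum the cohomological field theory factorizes through the splitting axiom into a product of a class $c_{g_1}$ on the first component and a class $c_{g_2}$ on the second, the $p$ gluing nodes being summed against $\prod_i\eta^{\mu_i\nu_i}$; and the total Hodge class factorizes as well, because gluing at $p$ nodes raises the arithmetic genus by $p-1$ while $\mathbb{E}_g|_{\mathrm{stratum}}=\mathbb{E}_{g_1}\oplus\mathbb{E}_{g_2}\oplus\mathcal{O}^{\oplus(p-1)}$, whence $\Lambda|_{\mathrm{stratum}}=\Lambda_{g_1}\Lambda_{g_2}$ (the trivial summand not affecting the total Chern class). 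Hence, after summing over the $A_I$- and $A_J$-data and the matching $p$-variables, the part of the integral supported on $DR_{g_1}(A_I,-k_1,\dots,-k_p)$ --- which carries $\psi_0^p$ together with $p$ new legs of multiplicities $-k_i$ and cohomological inputs $e_{\mu_i}$ --- assembles exactly into $\frac{\partial^p G_{\alpha,p}}{\partial p^{\mu_1}_{-k_1}\cdots\partial p^{\mu_p}_{-k_p}}(x)$, and likewise the $DR_{g_2}$-part into $\frac{\partial^p G_{\beta,q}}{\partial p^{\nu_1}_{k_1}\cdots\partial p^{\nu_p}_{k_p}}(y)$, the gluing indices being contracted by $\prod_i\eta^{\mu_i\nu_i}$. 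The powers of $i\hbar$ must be tracked: $(i\hbar)^g=(i\hbar)^{g_1}(i\hbar)^{g_2}(i\hbar)^{p-1}$, the first two factors being absorbed into the two generating functions above, so that together with the $\prod_i(i\hbar k_i)$ produced by the commutator and the prefactor $\frac1\hbar$ one recovers exactly the weight $\frac{\prod_i k_i}{p!}$ with a single $(i\hbar)^{p-1}$ surviving. Finally, the two sums of Theorem~\ref{BSSZ}, according to which of $s,t$ lies in $I$, match respectively the normal-ordered terms $-G_{\beta,q}(y)\star G_{\alpha,p}(x)$ and $G_{\alpha,p}(x)\star G_{\beta,q}(y)$ of the commutator (the ranges $k_i\geq 1$ being exactly those imposed by the normal ordering), and comparing the prefactor $-i$ on the left with the accumulated powers of $i$ on the right yields \eqref{eqmainlemma}.

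I expect the main obstacle to be not any single step but the simultaneous combinatorial bookkeeping: reconciling the $1/n!$ in the definitions of $G_{\alpha,d}$ and $G_{\alpha,p;\beta,q}$ with the $1/|I|!,\ 1/|J|!$ produced by the splitting and with the $1/p!$ appearing both in Theorem~\ref{BSSZ} and in the exponential of the $\star$-product, while keeping every sign and power of $i$ straight through the Hodge-class factorization and the identification of boundary strata with products of double ramification cycles. The only other delicate point is the treatment of the terms with $a_0=0$ or $a_{n+1}=0$, handled by a routine separate argument.
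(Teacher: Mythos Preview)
Your proposal is correct and follows exactly the route the paper takes: the paper's own proof consists of the single sentence ``The proof is a simple consequence of the definition of the generating series and the application of Theorem~\ref{BSSZ},'' and what you have written is precisely the unpacking of that sentence, including the factorization of $\Lambda$ and of the CohFT along the boundary strata, the matching of the two BSSZ sums with the two halves of the commutator, and the bookkeeping of the combinatorial and $i\hbar$ factors. Your remark that the terms with $a_0=0$ or $a_{n+1}=0$ must be treated separately (or, equivalently, absorbed by a polynomiality/density argument) is a fair caveat that the paper suppresses.
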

\begin{proof}
The proof is a simple consequence of the definition of the generating series and the application of Theorem \ref{BSSZ}.
\end{proof}

\subsection{Commutativity and recursion}

As consequences of Lemma \ref{mainlemma} we find
\begin{theorem}\label{commutativity}
For all  $\alpha,\beta=1,\ldots,N$ and $p,q=-1,0,1,\ldots$, we have
$$\left[ \oG_{\alpha,p},\oG_{\beta,q} \right] = 0$$
\end{theorem}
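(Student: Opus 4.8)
The plan is to deduce commutativity of the Hamiltonians $\oG_{\alpha,p}$ directly from the Main Lemma (Lemma~\ref{mainlemma}), essentially by integrating the identity \eqref{eqmainlemma} in both variables over the circle. First I would treat the generic case $p,q\geq 0$. Integrating \eqref{eqmainlemma} with respect to $x$ over the circle kills the term $\partial_x G_{\alpha,p+1;\beta,q}(x,y)$, since the integral of a total $x$-derivative of a differential polynomial (equivalently, the coefficient of $e^{i0x}$ in $\partial_x$ of a Fourier series) vanishes. Integrating the result also with respect to $y$ similarly kills $\partial_y G_{\alpha,p;\beta,q+1}$, so the left-hand side becomes zero. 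For the right-hand side, I need the elementary fact that extracting the $e^{i0x}e^{i0y}$-Fourier coefficient of $\frac{1}{\hbar}[G_{\alpha,p}(x),G_{\beta,q}(y)]$ is the same as $\frac1\hbar[\oG_{\alpha,p},\oG_{\beta,q}]$; this follows because, in the normal-ordering $\star$-product, passing to the $e^{i0x}$-coefficient (i.e.\ restricting to Fourier mode zero in each argument) commutes with the Weyl multiplication — concretely, $\int\!\int [f(x),g(y)]\,dx\,dy = [\overline f,\overline g]$, which is exactly the $n\geq 1$ part of the commutator formula (1.2) specialized appropriately. Hence $[\oG_{\alpha,p},\oG_{\beta,q}]=0$ for $p,q\geq 0$.

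Next I would dispatch the cases involving the manually-added densities $\oG_{\alpha,-1}=\int\eta_{\alpha\mu}u^\mu dx$. For $[\oG_{\alpha,-1},\oG_{\beta,q}]$ with $q\geq 0$, one cannot invoke Lemma~\ref{mainlemma} directly (there is no $G$ with second descendant index $-1$ in that lemma as stated), so I would argue separately: $\oG_{\alpha,-1}$ is a linear functional in the $u$'s, i.e.\ $\partial G_{\alpha,-1}/\partial u^\mu_k = \eta_{\alpha\mu}\delta_{k,0}$ and all higher derivatives vanish, so in formula (1.2) only the $n=1$, $s_1=0$ term survives; the resulting commutator $\frac1\hbar[\oG_{\alpha,-1},\oG_{\beta,q}]$ reduces to $\int \eta_{\alpha\mu}\eta^{\mu\beta_1}\partial_x(\partial G_{\beta,q}/\partial u^{\beta_1}_{r_1})\cdots$, which is a total $x$-derivative and hence integrates to zero. (Alternatively, and more conceptually, $\oG_{\alpha,-1}$ generates the spatial translation flow $\partial_x$ up to the $\eta$-pairing — this is the standard fact that $\{u^\gamma,\oG_{\alpha,-1}\}_{\mathrm{cl}}=\eta_{\alpha\mu}\eta^{\mu\gamma}u^\gamma_x = \delta^\gamma\cdot$, whose quantization is again translation — and all the $\oG_{\beta,q}$ are translation-invariant, being integrals over the circle of local densities.) Finally $[\oG_{\alpha,-1},\oG_{\beta,-1}]=0$ is immediate from (1.2) since both are linear in $u$: the only contribution is $n=1$, $s_1=r_1=0$, giving $\int\eta_{\alpha\mu}\eta^{\mu\nu}\partial_x(\eta_{\beta\nu})dx$, which vanishes as $\eta$ is constant.

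The one subtlety I want to flag — the main obstacle — is the bookkeeping in the step "$\partial_x G_{\alpha,p+1;\beta,q}(x,y)$ integrates to zero in $x$". A priori $G_{\alpha,p+1;\beta,q}(x,y)$ is a generating series in the $p^{\alpha_i}_{a_i}$ with an explicit $e^{-ia_0x-ia_{n+1}y}$ factor, not manifestly a differential polynomial evaluated at $u_s=\sum(ik)^sp_ke^{ikx}$; one must first check (as the paper does for the $G_{\alpha,d}$ themselves, using the polynomiality of DR-integrals in the $a_i$ proved in the Appendix) that $G_{\alpha,p+1;\beta,q}(x,y)$ is in fact a quantum differential polynomial in the $u$-variables at $x$ (with coefficients depending on $y$ through another such series), so that $\partial_x$ acting on it is genuinely $\partial_x=\sum u^\gamma_{k+1}\partial/\partial u^\gamma_k$ and its circle-integral (the $e^{i0x}$-coefficient) vanishes. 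This is where the standing polynomiality assumption (equivalently the Appendix) is really used; once it is in place, the argument above is a formal manipulation. I expect the remaining steps to be routine, essentially reorganizing the commutator formula (1.2) and matching Fourier coefficients.
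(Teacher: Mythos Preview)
Your approach is essentially the same as the paper's: integrate the identity of Lemma~\ref{mainlemma} in both variables for $p,q\ge 0$, and handle the index $-1$ separately via the Casimir property of $\oG_{\alpha,-1}$. The paper does this in two sentences; your write-up just unpacks the details (your direct verification via formula~(1.2) that only the $n=1$ term survives is exactly a proof that $\oG_{\alpha,-1}=\eta_{\alpha\mu}p^\mu_0$ is central, since $[p^\mu_0,\,\cdot\,]=0$).

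One comment on the ``subtlety'' you flag: it is not actually an obstacle. The integrations here are purely formal extractions of the zero Fourier mode. Writing $G_{\alpha,p+1;\beta,q}(x,y)=\sum_{k}G_k(y)e^{ikx}$ with $G_k(y)$ in the Weyl algebra (tensored with formal series in $e^{ily}$), one has $\int \partial_x G\,dx = i\cdot 0\cdot G_0(y)=0$ by definition, and similarly for the $y$-integration; no differential-polynomial structure is needed for this step. Likewise, $\int\!\int [G_{\alpha,p}(x),G_{\beta,q}(y)]\,dx\,dy=[\oG_{\alpha,p},\oG_{\beta,q}]$ follows just from bilinearity of the Weyl commutator and extraction of the $(0,0)$ Fourier mode. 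The polynomiality hypothesis from the Appendix is used to ensure the \emph{densities} $G_{\alpha,d}$ are quantum differential polynomials (so that the hierarchy lives in the right space), but it plays no role in this particular commutativity argument.
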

\begin{proof}
One simply integrates equation (\ref{eqmainlemma}) with respect to both $x$ and $y$. Notice that $\oG_{\alpha,-1}$ is a Casimir of the standard quantum commutator and hence commutes automatically with all other~$\oG_{\beta,q}$.
\end{proof}

\begin{theorem}\label{theorem:recursion}
For all $\alpha=1,\ldots,N$ and $p=-1,0,1,\ldots$, we have
\begin{gather}\label{eq:first recursion}
\partial_x (D-1) G_{\alpha,p+1} =\frac{1}{\hbar} \left[ G_{\alpha,p} , \oG_{1,1} \right],
\end{gather}
\begin{equation}\label{eq:second recursion}
\partial_x \frac{\partial G_{\alpha,p+1}}{\partial u^\beta} =\frac{1}{\hbar} \left[G_{\alpha,p}, \oG_{\beta,0} \right],
\end{equation}
where $D:=\eps\frac{\partial}{\partial\eps} + 2\hbar\frac{\partial}{\partial \hbar} + \sum_{s\ge 0} u^\alpha_s\frac{\partial}{\partial u^\alpha_s}$.
\end{theorem}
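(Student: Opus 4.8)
The plan is to derive both recursions from Lemma \ref{mainlemma}, by specializing the pair of indices $(\beta,q)$ and integrating \eqref{eqmainlemma} in $y$ over the circle, in the same spirit as the classical recursions of \cite{BR14}. In both cases the term $\partial_y G_{\alpha,p;\beta,q+1}(x,y)$ integrates to zero, being a total $y$-derivative of a formal Fourier series, and $\int [G_{\alpha,p}(x),G_{\beta,q}(y)]\,dy=[G_{\alpha,p}(x),\oG_{\beta,q}]$; so taking $(\beta,q)=(\beta,0)$ and $(\beta,q)=(1,1)$ leaves us with
$$\partial_x\int G_{\alpha,p+1;\beta,0}(x,y)\,dy=\frac1\hbar\left[G_{\alpha,p}(x),\oG_{\beta,0}\right],\qquad\partial_x\int G_{\alpha,p+1;1,1}(x,y)\,dy=\frac1\hbar\left[G_{\alpha,p}(x),\oG_{1,1}\right].$$
Everything then reduces to the two identities $\int G_{\alpha,p+1;\beta,0}(x,y)\,dy=\frac{\partial G_{\alpha,p+1}}{\partial u^\beta}$ and $\int G_{\alpha,p+1;1,1}(x,y)\,dy=(D-1)G_{\alpha,p+1}$, which I would establish for $p\ge 0$; the case $p=-1$ is treated separately at the end.

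The first identity is pure bookkeeping. Integrating $G_{\alpha,p+1;\beta,0}(x,y)$ in $y$ forces the last ramification index to vanish and kills the now trivial $\psi_{n+1}$-insertion. On the other hand, under the substitution $u^\alpha_s=\sum_k(ik)^sp^\alpha_ke^{ikx}$ the operator $\frac{\partial}{\partial u^\beta}=\frac{\partial}{\partial u^\beta_0}$ corresponds to $\frac{\partial}{\partial p^\beta_0}$, since $p^\beta_0$ enters the jet variables only through $u^\beta_0$, and with coefficient $1$. Differentiating the monomial expansion \eqref{density} of $G_{\alpha,p+1}$ in $p^\beta_0$ — which, by the $S_n$-symmetry of that sum, strips off one factor $p^\beta_0$ and promotes it to an extra CohFT input $e_\beta$ with ramification index $0$ and no $\psi$-class — then reproduces the expansion of $\int G_{\alpha,p+1;\beta,0}(x,y)\,dy$ (after relabelling the marked point carrying $e_\alpha$), which gives \eqref{eq:second recursion} for $p\ge 0$.

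The second identity is the substantial point. After the $y$-integration it reduces, term by term in the genus $g$ and the number $n$ of remaining insertions, to the dilaton-type relation
\begin{multline*}
\int_{DR_g(a_0,\ldots,a_n,0)}\Lambda\left(\frac{-\eps^2}{i\hbar}\right)\psi_0^{p+1}\,\psi_{n+1}\,c_{g,n+2}\!\left(e_\alpha\otimes{\textstyle\bigotimes_i}e_{\alpha_i}\otimes e_1\right)=\\
(2g-1+n)\int_{DR_g(a_0,\ldots,a_n)}\Lambda\left(\frac{-\eps^2}{i\hbar}\right)\psi_1^{p+1}\,c_{g,n+1}\!\left(e_\alpha\otimes{\textstyle\bigotimes_i}e_{\alpha_i}\right),
\end{multline*}
with $a_0=-\sum_{i=1}^na_i$. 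To prove this I would let $\pi\colon\oM_{g,n+2}\to\oM_{g,n+1}$ forget the marked point carrying $e_1$ and use the four inputs: $DR_g(a_0,\ldots,a_n,0)=\pi^*DR_g(a_0,\ldots,a_n)$ (behavior of the double ramification cycle under adding a zero ramification index); the Hodge classes, hence $\Lambda\left(\frac{-\eps^2}{i\hbar}\right)$, are pulled back along $\pi$; the unit axiom of the CohFT gives $c_{g,n+2}(\cdots\otimes e_1)=\pi^*c_{g,n+1}(\cdots)$; and the comparison $\psi_0=\pi^*\psi_0+D_{0,n+1}$ with $\psi_{n+1}\cdot D_{0,n+1}=0$ (because $\psi_{n+1}$ restricts to zero on that boundary divisor), so that $\psi_{n+1}\psi_0^{p+1}=\psi_{n+1}\pi^*(\psi_0^{p+1})$. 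The integrand thus becomes $\psi_{n+1}$ times a $\pi$-pullback, and the projection formula with $\pi_*\psi_{n+1}=2g-2+(n+1)$ gives the relation. Finally, on the genus-$g$, $n$-insertion part of $G_{\alpha,p+1}$ the operator $D$ acts by multiplication by $2g+n$: the operator $\eps\frac{\partial}{\partial\eps}+2\hbar\frac{\partial}{\partial\hbar}$ multiplies the $\lambda_j$-term of $(i\hbar)^g\Lambda\left(\frac{-\eps^2}{i\hbar}\right)$ by $2j+2(g-j)=2g$, while $\sum_{s\ge0}u^\alpha_s\frac{\partial}{\partial u^\alpha_s}$ is the Euler operator and acts on \eqref{density} by $\sum_{\gamma,k}p^\gamma_k\frac{\partial}{\partial p^\gamma_k}$, counting the $n$ factors $p^{\alpha_i}_{a_i}$. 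Hence $D-1$ acts by $2g-1+n$, matching the dilaton coefficient, and \eqref{eq:first recursion} follows for $p\ge0$.

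For $p=-1$, which is not covered by Lemma \ref{mainlemma}, I would argue directly: from $G_{\alpha,-1}=\eta_{\alpha\mu}u^\mu$ the commutator formula gives $\frac1\hbar[G_{\alpha,-1},\og]=\partial_x\frac{\delta\og}{\delta u^\alpha}$, so \eqref{eq:first recursion} and \eqref{eq:second recursion} reduce to $(D-1)G_{\alpha,0}=\frac{\delta\oG_{1,1}}{\delta u^\alpha}$ and $\frac{\partial G_{\alpha,0}}{\partial u^\beta}=\frac{\delta\oG_{\beta,0}}{\delta u^\alpha}$, which follow from the symmetry of the CohFT together with the already established $p\ge0$ case, as in \cite{BR14}. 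The hard part of the whole argument will be the dilaton-type relation above: it requires simultaneous clean control of $DR_g(a_0,\ldots,a_n,0)=\pi^*DR_g(a_0,\ldots,a_n)$, of the comparison of $\psi_0^{p+1}$ under the forgetful map, and of the unit axiom, and then the bookkeeping matching the factor $2g-1+n$ with the eigenvalue of $D-1$ on \eqref{density}; the reduction from Lemma \ref{mainlemma}, the first identity, and the case $p=-1$ are all formal manipulations of generating series.
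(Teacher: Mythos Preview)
Your proposal is correct and follows essentially the same route as the paper: integrate \eqref{eqmainlemma} in $y$ and reduce to the two identities $\int G_{\alpha,p+1;\beta,0}(x,y)\,dy=\frac{\partial G_{\alpha,p+1}}{\partial u^\beta}$ and $\int G_{\alpha,p+1;1,1}(x,y)\,dy=(D-1)G_{\alpha,p+1}$, which the paper simply asserts while you actually supply the dilaton-equation argument and the eigenvalue computation for $D-1$. For $p=-1$ the paper takes a slightly more direct path, computing $\frac1\hbar[G_{\alpha,-1},\oG_{\beta,q}]=\int\partial_x G_{\alpha,0;\beta,q}(x,y)\,dy$ in the $p$-variables and then reusing those same two identities at level $p+1=0$, which shortcuts your variational-derivative/symmetry reduction but is equivalent to it.
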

\begin{proof}
Suppose $p\ge 0$. For both formulae one needs to integrate with respect to $y$ equation (\ref{eqmainlemma}) with $(\beta,q)=(1,1)$ and $(\beta,q)=(\beta,0)$ respectively and, for the first equation, use the following version of the divisor equation 
\begin{gather}\label{eq:dil}
\int G_{\alpha,p+1;1,1}(x,y) dy = (D-1) G_{\alpha,p+1}(x),
\end{gather}
while for the second one, the fact that, by definition, 
\begin{gather}\label{eq:der}
\int G_{\alpha,p+1;\beta,0}(x,y) dy= \frac{\partial G_{\alpha,p+1}(x)}{\partial u^\beta}.
\end{gather}
If $p=-1$, then we have
$$
\frac{1}{\hbar}[G_{\alpha,-1},\oG_{\beta,q}]=\frac{1}{\hbar}\left[\sum_k \eta_{\alpha\mu}p^\mu_k e^{ikx},\oG_{\beta,q}\right]=\sum_k ik\frac{\d\oG_{\beta,q}}{\d p^\alpha_{-k}}e^{ikx}=\int\d_x G_{\alpha,0;\beta,q}(x,y)dy.
$$ 
We again finish the proof using equation~\eqref{eq:dil} or equation~\eqref{eq:der}. 
\end{proof}

\subsection{String equation}

We have the following immediate generalization of a Lemma from \cite{Bur14}.
\begin{lemma}
We have $\oG_{1,0} = \frac{1}{2}\int \left(\eta_{\mu \nu} u^\mu u^\nu \right) dx$, so that, for any quantum differential polynomial $f=f(u^\alpha,u^\alpha_x,\ldots)$, we have $\frac{1}{\hbar}\left[f, \oG_{1,0} \right] = \partial_x f$.
\end{lemma}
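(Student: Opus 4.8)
The plan is to evaluate $\oG_{1,0}$ by analysing the genus/marked-point expansion of the density $G_{1,0}$, exploiting the fact that passing from $G_{1,0}(x)$ to the local functional $\oG_{1,0}$ collapses almost all of the geometry. Recall that $\oG_{1,0}$ is the coefficient of $e^{i0x}$ in $G_{1,0}$ taken modulo constants; hence only the monomials with $\sum_{i=1}^{n}a_{i}=0$ survive, and every $n=0$ term (a constant in the jet variables) can be discarded outright. For the remaining terms the first ramification multiplicity of the double ramification cycle equals $-\sum_{i=1}^{n}a_{i}=0$, so (using $\psi_{1}^{0}=1$ since $d=0$) what I need to control is
\[
\int_{DR_{g}\left(0,a_{1},\dots,a_{n}\right)}\Lambda\!\left(\tfrac{-\eps^{2}}{i\hbar}\right)c_{g,n+1}\!\left(e_{1}\otimes\otimes_{i=1}^{n}e_{\alpha_{i}}\right),\qquad n\ge 1,\ \textstyle\sum_{i}a_{i}=0 .
\]

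The heart of the argument is a vanishing valid whenever $2g-2+n>0$. Let $\pi\colon\oM_{g,n+1}\to\oM_{g,n}$ be the morphism forgetting the \emph{first} marked point. I would invoke three standard facts, all about this same point: (i) the double ramification cycle satisfies $DR_{g}(0,a_{1},\dots,a_{n})=\pi^{*}DR_{g}(a_{1},\dots,a_{n})$, forgetting a point of zero ramification being harmless; (ii) Hodge classes are pulled back, so $\Lambda(\tfrac{-\eps^{2}}{i\hbar})=\pi^{*}\Lambda(\tfrac{-\eps^{2}}{i\hbar})$; (iii) the flat-unit axiom of the cohomological field theory gives $c_{g,n+1}(e_{1}\otimes\otimes_{i}e_{\alpha_{i}})=\pi^{*}c_{g,n}(\otimes_{i}e_{\alpha_{i}})$. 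Thus the whole integrand is $\pi^{*}$ of a class on $\oM_{g,n}$, and since the fibres of $\pi$ are one-dimensional, $\int_{\oM_{g,n+1}}\pi^{*}(-)=\int_{\oM_{g,n}}(-)\cdot\pi_{*}(1)=0$. This annihilates every summand with $2g-2+n>0$.

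Within the relevant range $2g-1+n>0$, the only pairs with $2g-2+n\le 0$ are $(g,n)=(1,0)$, already dropped as a constant, and $(g,n)=(0,2)$, for which $\oM_{0,2}$ does not exist and the argument above does not apply. This last term I compute by hand: $\oM_{0,3}$ is a point, $DR_{0}$ is its fundamental class, $\Lambda\equiv 1$ in genus $0$, and the unit axiom gives $c_{0,3}(e_{1}\otimes e_{\alpha_{1}}\otimes e_{\alpha_{2}})=\eta_{\alpha_{1}\alpha_{2}}$, so this summand equals $\frac{1}{2!}\sum_{a_{1},a_{2}}\sum_{\alpha_{1},\alpha_{2}}\eta_{\alpha_{1}\alpha_{2}}p^{\alpha_{1}}_{a_{1}}p^{\alpha_{2}}_{a_{2}}e^{i(a_{1}+a_{2})x}=\frac12\eta_{\mu\nu}u^{\mu}u^{\nu}$. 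Summing all contributions gives $\oG_{1,0}=\frac12\int(\eta_{\mu\nu}u^{\mu}u^{\nu})\,dx$.

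For the second statement I would substitute $\og=\oG_{1,0}=\frac12\int\eta_{\mu\nu}u^{\mu}u^{\nu}\,dx$ into the commutator formula for $[f,\og]$ established in Section~1. Since $\frac12\eta_{\mu\nu}u^{\mu}u^{\nu}$ depends only on $u^{\mu}_{0}$ and is quadratic, its derivatives of order $\ge 3$ in the jet variables vanish and its second derivatives are constant; as the coefficients $C_{j}$ in that formula occur only with $j\ge 1$, the contributions with $n\ge 2$ all vanish, leaving only the $n=1$ term. Using $C^{a}_{j}=\delta_{j,a}$ and $\frac{\partial}{\partial u^{\beta}_{0}}(\tfrac12\eta_{\mu\nu}u^{\mu}u^{\nu})=\eta_{\beta\mu}u^{\mu}$, that term equals $\hbar\sum_{s\ge 0}\frac{\partial f}{\partial u^{\mu}_{s}}\,\partial_{x}^{s+1}u^{\mu}=\hbar\,\partial_{x}f$, hence $\frac1\hbar[f,\oG_{1,0}]=\partial_{x}f$. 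I expect the only genuine difficulty to be in the geometric input of the second paragraph: one must make sure that the DR-pullback identity and the unit axiom are applied strictly in the stability range $2g-2+n>0$, and that $(g,n)=(0,2)$ is correctly isolated as the unique term producing $\eta_{\mu\nu}u^{\mu}u^{\nu}$; everything else is routine bookkeeping.
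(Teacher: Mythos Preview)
Your proof is correct and follows the same approach as the paper: the key vanishing comes from the pullback identities $DR_g(0,a_1,\ldots,a_n)=\pi^*DR_g(a_1,\ldots,a_n)$, $c_{g,n+1}(e_1\otimes\cdots)=\pi^*c_{g,n}(\cdots)$, and $\Lambda=\pi^*\Lambda$, leaving only the $(g,n)=(0,2)$ term computed via $c_{0,3}(e_1,e_\mu,e_\nu)=\eta_{\mu\nu}$. You are in fact more careful than the paper in isolating and disposing of the $(g,n)=(1,0)$ term as a constant, and in spelling out the commutator computation.
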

\begin{proof}
The proof is the same as for the classical limit. Since $DR_g\left(0,a_1,\ldots,a_n\right) = \pi^* DR_g\left(a_1,\ldots,a_n\right)$, $c_{g,n+1}\left(e_1\otimes\otimes_{i=1}^n e_{\alpha_i}\right) = \pi^* c_{g,n}\left(\otimes_{i=1}^n e_{\alpha_i}\right)$ and $\Lambda(\epsilon)=\pi^* \Lambda(\epsilon)$, where $\pi:\oM_{g,n+1}\to\oM_{g,n}$, the contribution to $\oG_{1,0}$ vanishes for $g>0$ or $n>2$. For $g=0$ and $n=2$ the result follows from $c_{0,3}(e_1,e_\mu,e_\nu)=\eta_{\mu \nu}$.
\end{proof}

We also have the following version of the string equation for the quantum double ramification hierarchy.
\begin{lemma}
For all $\alpha=1,\ldots,N$ and $d=-1,0,1,\ldots$ we have
\begin{equation}\label{eq:string}
G_{\alpha,d} = \frac{\partial G_{\alpha,d+1}}{\partial u^1}.
\end{equation}
\end{lemma}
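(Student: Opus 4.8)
The plan is to split the statement into the two cases $d\ge 0$ and $d=-1$. The first is a direct application of the string equation on $\oM_{g,n+1}$ applied to the generating series of intersection numbers; the second must be checked by hand, since $G_{\alpha,-1}$ is defined by fiat, and the whole answer will turn out to come from an ``unstable'' correction.

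\emph{Case $d\ge 0$.} By \eqref{eq:der} one has $\frac{\partial G_{\alpha,d+1}(x)}{\partial u^1}=\int G_{\alpha,d+1;1,0}(x,y)\,dy$, which is exactly the generating series of the integrals of $\Lambda\,\psi_0^{d+1}\,c_{g,n+2}\!\left(e_\alpha\otimes\otimes_{i=1}^{n}e_{\alpha_i}\otimes e_1\right)$ over $DR_g(a_0,a_1,\ldots,a_n,0)$, the extra marked point carrying the unit $e_1$, no $\psi$-class and vanishing ramification multiplicity (here $\psi_0$ denotes the class at the $e_\alpha$-point). I would then use, for the forgetful morphism $\pi\colon\oM_{g,n+2}\to\oM_{g,n+1}$ dropping that extra point, the same three pullback relations recalled in the proof of the preceding lemma, namely $DR_g(a_0,\ldots,a_n,0)=\pi^*DR_g(a_0,\ldots,a_n)$, $c_{g,n+2}\!\left(e_\alpha\otimes\otimes_{i=1}^{n}e_{\alpha_i}\otimes e_1\right)=\pi^*c_{g,n+1}\!\left(e_\alpha\otimes\otimes_{i=1}^{n}e_{\alpha_i}\right)$ and $\Lambda=\pi^*\Lambda$, together with the pushforward form of the string equation $\pi_*\big(\psi_0^{d+1}\big)=\psi_0^{d}$ (legitimate because $d+1\ge 1$). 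The projection formula then turns each correlator of $\frac{\partial G_{\alpha,d+1}}{\partial u^1}$ into the correlator $\int_{DR_g(a_0,\ldots,a_n)}\Lambda\,\psi_0^{d}\,c_{g,n+1}\!\left(e_\alpha\otimes\otimes_{i=1}^{n}e_{\alpha_i}\right)$ defining $G_{\alpha,d}$, while the prefactors $\tfrac{(i\hbar)^g}{n!}$, the monomials $p^{\alpha_1}_{a_1}\cdots p^{\alpha_n}_{a_n}$ and the exponentials $e^{ix\sum a_i}$ are unchanged; hence the two series agree term by term. The only pair $(g,n)$ appearing in $\int G_{\alpha,d+1;1,0}\,dy$ that falls outside the stability range $2g-1+n>0$ of $G_{\alpha,d}$ is $(g,n)=(0,1)$, supported on $\oM_{0,3}$, where $\psi_0=0$ and therefore $\psi_0^{d+1}=0$ for $d\ge 0$; so no spurious term survives.

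\emph{Case $d=-1$.} Here $G_{\alpha,-1}=\eta_{\alpha\mu}u^\mu$ by definition, and I would compute $\frac{\partial G_{\alpha,0}}{\partial u^1}=\int G_{\alpha,0;1,0}(x,y)\,dy$ directly. For every term with $2g-1+n>0$ the forgetful morphism above applies verbatim, but now with $\psi_0^{0}=1$, so $\pi_*(1)=0$ and all these contributions vanish. The single surviving term is $(g,n)=(0,1)$ on $\oM_{0,3}$ — which cannot be obtained by forgetting a point from a stable curve and therefore escapes the cancellation — and it equals $\sum_{a_1}\Big(\int_{\oM_{0,3}}c_{0,3}(e_\alpha\otimes e_{\alpha_1}\otimes e_1)\Big)p^{\alpha_1}_{a_1}e^{i a_1 x}=\eta_{\alpha\alpha_1}u^{\alpha_1}=\eta_{\alpha\mu}u^\mu$, using $c_{0,3}(e_\alpha,e_{\alpha_1},e_1)=\eta_{\alpha\alpha_1}$ and the triviality of $\Lambda$ and of all $\psi$-classes in genus $0$ with three marked points. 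Hence $\frac{\partial G_{\alpha,0}}{\partial u^1}=\eta_{\alpha\mu}u^\mu=G_{\alpha,-1}$.

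I expect the only genuinely delicate point to be the bookkeeping of stability ranges together with \eqref{eq:der}: one must be certain that $\frac{\partial G_{\alpha,d+1}}{\partial u^1}$ really is the generating series with one additional $e_1$-marked point of zero ramification and no $\psi$-class, with no further boundary corrections, and — in the case $d=-1$ — that the entire answer $\eta_{\alpha\mu}u^\mu$ is precisely the unstable $(0,1)$-correction that the forgetful-morphism argument does not see. The moduli-theoretic ingredients (the $DR$, unit and Hodge pullback relations, and $\pi_*\psi_0^{k}=\psi_0^{k-1}$) are standard and were already used in the preceding lemma.
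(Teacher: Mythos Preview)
Your proof is correct and follows essentially the same approach as the paper: both express $\frac{\partial G_{\alpha,d+1}}{\partial u^1}$ as the generating series with an extra $e_1$-marked point of zero ramification, then use the pullback relations for $DR_g$, $\Lambda$ and the CohFT under the forgetful map together with $\pi_*\psi_0^{d+1}=\psi_0^{d}$, and handle the unstable $(g,n)=(0,1)$ term on $\oM_{0,3}$ separately. The only difference is organizational---you split on $d$ first and invoke \eqref{eq:der}, whereas the paper writes the derivative out directly and gives a single three-case formula for the resulting integral---but the content is the same.
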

\begin{proof}
We have
\begin{gather*}
\begin{split}
\frac{\d G_{\alpha,d+1}}{\d u^1}=&\sum_{\substack{g,n\ge 0\\2g+n>0}}\frac{(i\hbar)^g}{n!}\times\\
&\sum_{\substack{a_0,a_1,\ldots,a_n\in\mbZ\\ \alpha_1,\ldots,\alpha_n}}\left(\int_{DR_g\left(a_0,a_1,\ldots,a_n,0\right)}\Lambda\left(\frac{-\eps^2}{i \hbar}\right) \psi_0^{d+1} c_{g,n+2}\left(e_\alpha\otimes\otimes_{i=1}^n e_{\alpha_i}\otimes e_1\right)\right)\left(\prod_{i=1}^n p^{\alpha_i}_{a_i}\right)e^{-ia_0x}.
\end{split}
\end{gather*}
The same argument as in the proof of the previous lemma shows that
\begin{multline*}
\int_{DR_g\left(a_0,a_1,\ldots,a_n,0\right)}\Lambda\left(\frac{-\eps^2}{i \hbar}\right) \psi_0^{d+1} c_{g,n+2}\left(e_\alpha\otimes\otimes_{i=1}^n e_{\alpha_i}\otimes e_1\right)=\\
=
\begin{cases}
\int_{DR_g\left(a_0,a_1,\ldots,a_n\right)}\Lambda\left(\frac{-\eps^2}{i \hbar}\right) \psi_0^d c_{g,n+1}\left(e_\alpha\otimes\otimes_{i=1}^n e_{\alpha_i}\right),&\text{if $2g+n>1$ and $d\ge 0$},\\
0,&\text{if $2g+n>1$ and $d=-1$},\\
\delta_{d,-1}\eta_{\alpha\alpha_1},&\text{if $g=0$ and $n=1$}.
\end{cases}
\end{multline*}
The lemma is proved.
\end{proof}

\subsection{Reconstruction of the hierarchy from $\oG_{1,1}$}\label{subsection:reconstruction}

Notice that equation (\ref{eq:first recursion}) allows us to recover the hamiltonian density $G_{\alpha,d+1}$ up to a constant starting from the knowledge of $G_{\alpha,d}$ and $\oG_{1,1}$. On the other hand recursion~\eqref{eq:first recursion} is insensitive to the constant term in $G_{\alpha,d}$. We see that the all hamiltonian densities $G_{\alpha,d}$ can be determined up to constants starting from the knowledge of $\oG_{1,1}$ alone (and the fact that $G_{\alpha,-1}=\eta_{\alpha\mu} u^\mu$ for any CohFT). The constant terms can be recovered using the string equation~\eqref{eq:string}. We conclude that, once we compute $\oG_{1,1}$, the quantum double ramification hierarchy is completely identified.


\section{Examples}

In this section we consider several examples of the quantum double ramification hierarchies. In all these examples the corresponding cohomological field theory is semisimple. This ensures that the densities of the quantum double ramification hierarchies in these cases are quantum differential polynomials.

\subsection{Quantum KdV hierarchy}

Consider the simplest cohomological field theory: 
$$
V=\<e_1\>,\quad {\eta_{1,1}=1},\quad c_{g,n}(e_1^{\otimes n})=1.
$$
In~\cite{Bur14} it was proved that the corresponding double ramification hierarchy coincides with the Korteweg-de Vries hierarchy. Therefore, our quantum double ramification hierarchy in this case gives a quantization of the KdV hierarchy. In Section~\ref{subsubsection:full quantum KdV} we compute several first quantum Hamiltonians for the full hierarchy. In Section~\ref{subsubsection:dispersionless quantum KdV} we check an explicit formula for all quantum densities in the dispersionless limit.

We will omit the first index in the densities~$G_{\alpha,d}$ and in the Hamiltonians~$\oG_{\alpha,d}$.

\subsubsection{Full hierarchy}\label{subsubsection:full quantum KdV}

Let us compute the Hamiltonian $\oG_1$. For the classical part~$\left.\oG_1\right|_{\hbar=0}$ we have (see~\cite{Bur14}):
$$
\left.\oG_1\right|_{\hbar=0}=\og_1=\int\left(\frac{u^3}{6}+\frac{\eps^2}{24}u u_{xx}\right)dx.
$$
Let us compute the quantum correction. For this we have to compute the integrals
$$
\int_{DR_g(0,a_1,\ldots,a_n)}\psi_1\lambda_j,
$$
for $g,n\ge 1$ and $j<g$. The dimension constraint says that $2g-2+n=1+j$. This equation holds only if $g=1, j=0$ and $n=1$. Therefore,
$$
\oG_1-\og_1=i\hbar\left(\int_{DR_g(0,0)}\psi_1\right)p_0=-i\hbar\left(\int_{\oM_{1,2}}\psi_1\lambda_1\right)p_0=-\frac{i\hbar}{24}\int udx.
$$
We conclude that
$$
\oG_1=\int\left(\frac{u^3}{6}+\frac{\eps^2}{24}u u_{xx}-\frac{i\hbar}{24}u\right)dx.
$$
According to Section~\ref{subsection:reconstruction}, this allows to compute all the densities~$G_d$. The first few are
\begin{align*}
G_0=&\frac{u^2}{2}+\frac{\eps^2}{24}u_{xx}-\frac{i\hbar}{24},\\
G_1=&\frac{u^3}{6}+\frac{\eps^2}{24}u u_{xx}+\frac{\eps^4}{1152}u_{xxxx}-i\hbar\frac{u+u_{xx}}{24}-\frac{i\hbar\eps^2}{2880},\\
G_2=&\frac{u^4}{24}+\eps^2\frac{u^2 u_2}{48}+\eps^4\left(\frac{7u_2^2}{5760}+\frac{u u_4}{1152}\right)+\eps^6 \frac{u_6}{82944}-i\hbar\frac{2uu_2+u^2}{48}-i\hbar\eps^2\frac{u+5u_2+4u_4}{2880}\\
&-\frac{i\hbar\eps^4}{120960}+(i\hbar)^2\frac{7}{5760}.
\end{align*}
In particular, for the quantum Hamiltonian $\oG_2$ we get
$$
\oG_2=\int\left(\frac{u^4}{24}+\eps^2\frac{u^2 u_2}{48}+\eps^4\frac{u u_4}{480}-i\hbar\frac{2uu_2+u^2}{48}-i\hbar\eps^2\frac{u}{2880}\right)dx.
$$

\subsubsection{Dispersionless hierarchy}\label{subsubsection:dispersionless quantum KdV}

Consider the dispersionless part of the quantum KdV hierarchy. Let $G_d^{[0]}:=\left.G_d\right|_{\eps=0}$. Introduce the generating series
$$
G^{[0]}(y)=\sum_{d\ge -1}y^d G_d^{[0]}.
$$
Let $S(z):=\frac{e^{\frac{z}{2}}-e^{-\frac{z}{2}}}{z}$.
\begin{proposition}
We have
\begin{gather}\label{eq:generating series}
G^{[0]}(y)=\frac{1}{y^2S(\sqrt{i}\lambda y)}e^{yS\left(\frac{\lambda}{\sqrt{i}}y\d_x\right)u}-y^{-2},
\end{gather}
where $\lambda$ is a formal variable such that $\lambda^2=\hbar$.
\end{proposition}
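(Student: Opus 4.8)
The plan is to invoke the reconstruction of Section~\ref{subsection:reconstruction}: for the trivial (KdV) cohomological field theory the dispersionless densities $G_d^{[0]}$ are pinned down by the initial datum $G_{-1}^{[0]}=u$, the string equation~\eqref{eq:string}, and the $\eps=0$ reduction of the recursion~\eqref{eq:first recursion}, in which $\oG_{1,1}|_{\eps=0}=\oG_1^{[0]}=\int\!\bigl(\tfrac{u^3}{6}-\tfrac{i\hbar}{24}u\bigr)dx$ as computed just above. So it suffices to check these three properties for the right-hand side of~\eqref{eq:generating series}. Write $\zeta:=\tfrac{\lambda}{\sqrt i}y$ and $w:=\sqrt i\,\lambda y$ — so that $w=i\zeta$ and $\zeta^2=-i\hbar y^2$ — and $A:=yS(\zeta\d_x)u$, so the claimed series reads $G^{[0]}(y)=\tfrac{e^A}{y^2S(w)}-y^{-2}$. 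Since $S(w)=1+O(\hbar y^2)$ and $A=yu+O(\hbar y^3)$, its $y^{-1}$-coefficient is $u=G_{-1}^{[0]}$; and since $\d_{u_0}A=y$, so $\d_{u_0}e^A=ye^A$, one reads off $\d_{u_0}G^{[0]}(y)=yG^{[0]}(y)+y^{-1}$, which is the generating-function form of~\eqref{eq:string}. This disposes of the initial condition and the string equation; the real work is the recursion.

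Multiplying~\eqref{eq:first recursion} by $y^{p}$, summing over $p\ge-1$ and setting $\eps=0$ — whereby $D$ is replaced by $D^{[0]}:=2\hbar\d_\hbar+\sum_{s\ge0}u_s\d_{u_s}$, the term $\eps\d_\eps$ dropping out — the whole family of recursions collapses, using $(D^{[0]}-1)(y^{-1}u)=0$, into the single identity
\[
\d_x(D^{[0]}-1)\,G^{[0]}(y)=\frac{y}{\hbar}\bigl[\,G^{[0]}(y),\,\oG_1^{[0]}\,\bigr].
\]
Since the density $\tfrac{u^3}{6}-\tfrac{i\hbar}{24}u$ is cubic in $u_0$ and free of jet variables, only the $n=1$ and $n=2$ terms of the commutator formula for $[f,\og]$ contribute (for $n=3$ the relevant third derivative is a constant, annihilated by $\d_x^{j}$ with $j\ge1$; $n\ge4$ vanish), so the right-hand side is explicit in terms of $\tfrac{\d G^{[0]}(y)}{\d u_s}$, $\tfrac{\d^2 G^{[0]}(y)}{\d u_{s_1}\d u_{s_2}}$, $\d_x^{s+1}\!\bigl(\tfrac{u^2}{2}\bigr)$, $u_j$, and the rational constants $C^{a,b}_j$ of Appendix~\ref{subsection:products}.

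Next I would substitute $G^{[0]}(y)=\tfrac{e^A}{y^2S(w)}-y^{-2}$ and observe that both sides acquire the common factor $\tfrac{e^A}{y^2S(w)}$: on the right because $A$ is linear in the jets with scalar coefficients, so $\tfrac{\d G^{[0]}(y)}{\d u_s}=\tfrac{e^A}{y^2S(w)}\,\d_{u_s}A$ and likewise for second derivatives; on the left because $\sum_s u_s\d_{u_s}A=A$, because $2\hbar\d_\hbar$ acts on $S(w)$ as $w\d_w$ and on $A$ as the grading $\sum_s s\,u_s\d_{u_s}$, and because $\d_x e^A=(\d_x A)e^A$; explicitly,
\[
\d_x(D^{[0]}-1)\,G^{[0]}(y)=\frac{e^A}{y^2S(w)}\Bigl[(\d_x A)\bigl(A+2\hbar\d_\hbar A-\tfrac{wS'(w)}{S(w)}\bigr)+2\hbar\d_\hbar\d_x A\Bigr].
\]
Cancelling this factor leaves a polynomial identity which I would split into its parts quadratic, resp.\ linear, in the jet variables. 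Using $A+2\hbar\d_\hbar A=y\cosh\!\bigl(\tfrac{\zeta\d_x}{2}\bigr)u$ (from $\tfrac{d}{dz}(zS(z))=\cosh\tfrac z2$) and $\d_x A=yS(\zeta\d_x)u_x$, the quadratic part is exactly
\[
S(\zeta\d_x)u_x\cdot\cosh\!\bigl(\tfrac{\zeta\d_x}{2}\bigr)u=S(\zeta\d_x)(uu_x),
\]
which follows from $zS(z)=2\sinh\tfrac z2$ and $2\sinh a\cosh b=\sinh(a{+}b)+\sinh(a{-}b)$ (the $\sinh(a{-}b)$ term, antisymmetric in the two factors, vanishing on the diagonal). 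Using $\tfrac{wS'(w)}{S(w)}=\tfrac\zeta2\cot\tfrac\zeta2-1$, the linear part is the statement that, writing $S(z)=\sum_m\sigma_m z^m$, for all $j\ge1$
\[
\Bigl(j-\tfrac\zeta2\cot\tfrac\zeta2\Bigr)\sigma_{j-1}\,\zeta^{\,j-1}=\frac12\sum_{a,b\ge1}\sigma_{a-1}\sigma_{b-1}\,\zeta^{a+b}\,C^{a,b}_j\,;
\]
I would prove this by multiplying the product rule $\delta_+^{(a)}(x)\,\delta_+^{(b)}(x)=\sum_j(-i)\,C^{a,b}_j\,\delta_+^{(j)}(x)$ of Appendix~\ref{subsection:products} by $\sigma_{a-1}\sigma_{b-1}\zeta^{a+b}$, summing, using $ik\zeta\,S(ik\zeta)=2i\sin\tfrac{k\zeta}{2}$ and $\sum_{j\ge1}j\,\sigma_{j-1}z^{\,j-1}=\cosh\tfrac z2$, and comparing coefficients of $e^{imx}$; the surviving trigonometric identity collapses by the angle-subtraction formula.

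The hard part will be this last reduction — isolating the common factor and carrying out the resummation that turns the combinatorial constants $C^{a,b}_j$ into a cotangent — together with keeping straight the various formal parameters ($\hbar=\lambda^2$, $w=i\zeta$, the factors $\sqrt i$). Once the recursion is reduced to the two displayed coefficient identities, each is elementary. A useful check along the way: expanding~\eqref{eq:generating series} in powers of $y$ must reproduce $G_{-1}^{[0]},\dots,G_2^{[0]}$ as computed in Section~\ref{subsubsection:full quantum KdV}.
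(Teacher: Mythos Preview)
Your proposal is correct and follows essentially the same route as the paper's own proof: verify that the claimed generating function satisfies the string equation and the generating-function form of recursion~\eqref{eq:first recursion}, then invoke the reconstruction of Section~\ref{subsection:reconstruction}. The only differences are organizational. The paper rewrites the left-hand side as $\d_x\frac{\d}{\d y}\bigl(yG^{[0]}(y)\bigr)$---equivalent to your $\d_x(D^{[0]}-1)G^{[0]}(y)$ since on the proposed formula $D^{[0]}$ and $y\d_y$ agree on $A$ and on $S(w)$---and makes the rescaling $\d_x\mapsto i\d_x$, $\lambda\mapsto\lambda/\sqrt i$ to strip the factors of $i$; it then evaluates the $n=2$ commutator sum directly from the defining relation $\tC^{a,b}(N)=\sum_{k_1+k_2=N}k_1^{a}k_2^{b}$, obtaining a closed hyperbolic expression that it matches term by term against the $y$-derivative side. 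Your evaluation via the $\delta_+^{(a)}\delta_+^{(b)}$ product rule and comparison of Fourier coefficients is the same computation in different clothing, leading to the same trigonometric identity (indeed your final step, expanding $\sin\!\bigl((m{+}1)\zeta/2\bigr)$, is exactly what closes the argument).
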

\begin{proof}
The proposition is an immediate consequence of the following formula that was proved in~\cite{BSSZ12}. Suppose $g,n\ge 0$ and $2g-1+n>0$, then we have
$$
\int_{DR_g(-\sum_{i=1}^n a_i,a_1,\ldots,a_n)}\psi_1^{2g-2+n}=\Coef_{z^{2g}}\left(\frac{\prod_{i=1}^n S(a_iz)}{S(z)}\right).
$$
However, we would like to present another proof of equation~\eqref{eq:generating series} that is based on our recursion from Theorem~\ref{theorem:recursion}. We can reformulate recursion~\eqref{eq:first recursion} in the following way:
\begin{gather}\label{eq:identity}
\d_x\frac{\d}{\d y}(y G^{[0]}(y))=\frac{y}{\hbar}\left[G^{[0]}(y),\overline G^{[0]}_1\right].
\end{gather}
Let us check that the right-hand side of~\eqref{eq:generating series} satisfies this equation.

Denote the coefficients of the power series~$S(z)$ by $s_i$: $S(z)=\sum_{i\ge 0}s_i z^{2i}$. Since $\oG^{[0]}_1=\int\left(\frac{u^3}{6}-\frac{i\hbar}{24}u\right)dx$, if we substitute~\eqref{eq:generating series} on the right-hand side of~\eqref{eq:identity}, we get
\begin{multline}\label{eq:tmp1}
\frac{1}{S(\sqrt i\lambda y)}\sum_{r\ge 0}\left(\frac{\lambda^2 y^2}{i}\right)^{r}s_r\d_x^{2r+1}\left(\frac{u^2}{2}\right)e^{yS\left(\frac{\lambda}{\sqrt i} y\d_x\right)u}+\\
+\frac{y\lambda^2}{2S(\sqrt i\lambda y)}\sum_{\substack{r_1,r_2\ge 0\\j\ge 1}}i^{-1-r_1-r_2}(\lambda y)^{2r_1+2r_2}s_{r_1}s_{r_2}C^{2r_1+1,2r_2+1}_j u_je^{yS\left(\frac{\lambda}{\sqrt i} y\d_x\right)u}.
\end{multline}
In order to shorten computations a little bit, let us make the following rescalings:
$$
\d_x\mapsto i\d_x,\qquad \lambda\mapsto\frac{\lambda}{\sqrt{i}}.
$$
Therefore, we have to prove that
\begin{align}
\d_x\frac{\d}{\d y}\left(\frac{1}{y S(\lambda y)}e^{yS(\lambda y\d_x)u}\right)=&\frac{1}{S(\lambda y)}\sum_{r\ge 0}(\lambda y)^{2r}s_r\d_x^{2r+1}\left(\frac{u^2}{2}\right)e^{yS(\lambda y\d_x)u}+\label{eq:identity2}\\
&+\frac{y\lambda^2}{2S(\lambda y)}\sum_{\substack{r_1,r_2\ge 0\\j\ge 1}}(\lambda y)^{2r_1}(\lambda y)^{2r_2}s_{r_1}s_{r_2}\tC^{2r_1+1,2r_2+1}_j u_je^{yS(\lambda y\d_x)u}.\notag
\end{align}
Here we use coefficients $\tC^{a_1,a_2}_j$ introduced in Section~\ref{subsection:sums of powers}. They are related to coefficients $C^{a_1,a_2}_j$ by formula~\eqref{eq:relation of coefficients}. By definition of the function $\tC^{2r_1+1,2r_2+1}(N)$ (see Section~\ref{subsection:sums of powers}), we have
\begin{align*}
\sum_{\substack{r_1,r_2\ge 0\\j\ge 1}}s_{r_1}s_{r_2}(\lambda y)^{2r_1}(\lambda y)^{2r_2}\tC^{2r_1+1,2r_2+1}_j k^j=&\sum_{k_1+k_2=k}S(\lambda y k_1)S(\lambda y k_2)k_1k_2=\\
=&\frac{1}{\lambda^2 y^2}\sum_{k_1+k_2=k}\left(e^{\frac{\lambda yk_1}{2}}-e^{-\frac{\lambda yk_1}{2}}\right)\left(e^{\frac{\lambda yk_2}{2}}-e^{-\frac{\lambda yk_2}{2}}\right)=\\
=&\frac{1}{\lambda^2 y^2}\left((k+1)\left(e^{\frac{\lambda y k}{2}}+e^{-\frac{\lambda y k}{2}}\right)-2\frac{e^{\frac{\lambda y(k+1)}{2}}-e^{-\frac{\lambda y(k+1)}{2}}}{e^{\frac{\lambda y}{2}}-e^{-\frac{\lambda y}{2}}}\right)=\\
=&\frac{1}{\lambda^2 y^2}\left(k\left(e^{\frac{\lambda yk}{2}}+e^{-\frac{\lambda yk}{2}}\right)-\frac{e^{\frac{\lambda y}{2}}+e^{-\frac{\lambda y}{2}}}{e^{\frac{\lambda y}{2}}-e^{-\frac{\lambda y}{2}}}\left(e^{\frac{\lambda yk}{2}}-e^{-\frac{\lambda yk}{2}}\right)\right).
\end{align*}
From this formula it follows that the right-hand side of~\eqref{eq:identity2} is equal to
\begin{align}
&\frac{1}{S(\lambda y)}\d_x S(\lambda y\d_x)\left(\frac{u^2}{2}\right)\cdot e^{yS(\lambda y\d_x)u}+\label{first1}\\
&+\frac{1}{2yS(\lambda y)}\left(\d_x\left(e^{\frac{\lambda y\d_x}{2}}+e^{-\frac{\lambda y\d_x}{2}}\right)-\frac{e^{\frac{\lambda y}{2}}+e^{-\frac{\lambda y}{2}}}{e^{\frac{\lambda y}{2}}-e^{-\frac{\lambda y}{2}}}\left(e^{\frac{\lambda y\d_x}{2}}-e^{-\frac{\lambda y\d_x}{2}}\right)\right)u\cdot e^{yS(\lambda y\d_x)u}.\label{second1}
\end{align}
Let us compute the left-hand side of~\eqref{eq:identity2}. We have
$$
\frac{\d}{\d y}\left(\frac{1}{y S(\lambda y)}e^{yS(\lambda y\d_x)u}\right)=\left[-\frac{\lambda^2}{2}\frac{e^{\frac{\lambda y}{2}}+e^{-\frac{\lambda y}{2}}}{\left(e^{\frac{\lambda y}{2}}-e^{-\frac{\lambda y}{2}}\right)^2}+\frac{\lambda}{2}\frac{e^{\frac{\lambda y\d_x}{2}}+e^{-\frac{\lambda y\d_x}{2}}}{e^{\frac{\lambda y}{2}}-e^{-\frac{\lambda y}{2}}}u\right]e^{yS(\lambda y\d_x)u}.
$$
Therefore,
\begin{align}
\d_x\frac{\d}{\d y}\left(\frac{1}{y S(\lambda y)}e^{yS(\lambda y\d_x)u}\right)=&\frac{\left(e^{\frac{\lambda y\d_x}{2}}+e^{-\frac{\lambda y\d_x}{2}}\right)u\cdot\left(e^{\frac{\lambda y\d_x}{2}}-e^{-\frac{\lambda y\d_x}{2}}\right)u}{2(e^{\frac{\lambda y}{2}}-e^{-\frac{\lambda y}{2}})}e^{yS(\lambda y\d_x)u}+\label{first2}\\
&+\left(\frac{\lambda}{2}\d_x\frac{e^{\frac{\lambda y\d_x}{2}}+e^{-\frac{\lambda y\d_x}{2}}}{e^{\frac{\lambda y}{2}}-e^{-\frac{\lambda y}{2}}}-\frac{\lambda}{2}\frac{e^{\frac{\lambda y}{2}}+e^{-\frac{\lambda y}{2}}}{\left(e^{\frac{\lambda y}{2}}-e^{-\frac{\lambda y}{2}}\right)^2}\left(e^{\frac{\lambda y\d_x}{2}}-e^{-\frac{\lambda y\d_x}{2}}\right)\right)u\cdot e^{yS(\lambda y\d_x)u}.\notag
\end{align}
It is easy to see that~\eqref{first1} is equal to the first summand on the right-hand side of~\eqref{first2} and~\eqref{second1} is equal to the second summand. Therefore, equation~\eqref{eq:identity2} is proved.

By Section~\ref{subsection:reconstruction}, it remains to check that the right-hand side of~\eqref{eq:generating series} satisfies the string equation
$$
\frac{\d G^{[0]}(y)}{\d u}=y G^{[0]}(y)+y^{-1}.
$$
This is a trivial computation. The proposition is proved.
\end{proof}


\subsection{Quantum ILW hierarchy}

Consider the cohomological field theory formed by linear Hodge integrals:
\begin{gather*}
V=\<e_1\>,\quad \eta_{1,1}=1,\quad c_{g,n}\left(e_1^{\otimes n}\right)=1+\mu\lambda_1+\ldots+\mu^g\lambda_g,
\end{gather*}
where $\mu$ is a formal parameter. In~\cite{Bur14,Bur13} it was proved that the corresponding double ramification hierarchy coincides, up to simple rescalings, with the hierarchy of the Intermediate Long Wave equation. In particular,
\begin{gather}\label{eq:classical ILW}
\og_1=\int\left(\frac{u^3}{6}+\sum_{g\ge 1}\eps^{2g}\mu^{g-1}\frac{|B_{2g}|}{2(2g)!}uu_{2g}\right)dx.
\end{gather}
Let us compute the Hamiltonian $\oG_1$ of the quantum double ramification hierarchy.
\begin{lemma}
We have
$$
\oG_1=\int\left(\frac{u^3}{6}+\sum_{g\ge 1}\eps^{2g}\mu^{g-1}\frac{|B_{2g}|}{2(2g)!}uu_{2g}-\frac{i\hbar}{24}u-i\hbar\sum_{g\ge 1}\eps^{2g-2}\mu^g\frac{|B_{2g}|}{2(2g)!}uu_{2g}\right)dx.
$$
\end{lemma}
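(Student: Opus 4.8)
The plan is to evaluate $\oG_1=\int G_1\,dx$ directly from the definition \eqref{density}, proceeding just as in the quantum KdV computation of Section~\ref{subsubsection:full quantum KdV}; the only new ingredient is the extra Hodge factor $\Lambda(\mu)=\sum_{k\ge 0}\mu^k\lambda_k$ coming from the cohomological field theory. First I would substitute $c_{g,n+1}(e_1^{\otimes(n+1)})=\Lambda(\mu)$ and expand both Hodge factors, so that (up to constants, which are discarded in the space of local functionals) $\oG_1$ becomes a sum over $g,n,j,k$ of $\frac{(i\hbar)^{g-j}}{n!}(-\eps^2)^j\mu^k$ times $\sum_{\sum a_i=0}\big(\int_{DR_g(0,a_1,\ldots,a_n)}\lambda_j\lambda_k\psi_1\big)p_{a_1}\cdots p_{a_n}e^{ix\sum a_i}$. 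A dimension count on $\oM_{g,n+1}$ — $\dim DR_g(0,a_1,\ldots,a_n)=2g-2+n$ against integrand degree $j+k+1$ with $0\le j,k\le g$ — leaves only $g=0,n=3$, which produces $\int\frac{u^3}{6}dx$, together with, for $g\ge 1$, the cases $n=1$ and $n=2$ (here $n=0$ gives only constants, $n=3$ forces $j=k=g$ hence $\lambda_g^2=0$, and $n>3$ is empty for dimension reasons).

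Next I would treat the $n=2$ terms. The dimension constraint forces $\{j,k\}=\{g-1,g\}$, so only $\lambda_g\lambda_{g-1}\psi_1$ occurs, and both orderings multiply the single number $J_g(a):=\int_{DR_g(0,a,-a)}\psi_1\lambda_g\lambda_{g-1}$: the ordering $(j,k)=(g,g-1)$ carries weight $(-\eps^2)^g\mu^{g-1}$ and no $\hbar$, while $(j,k)=(g-1,g)$ carries weight $i\hbar(-\eps^2)^{g-1}\mu^g$. Since $\lambda_g\cdot DR_g$ is supported on the compact type locus, where Hain's formula presents it as $\tfrac1{g!}$ times the $g$-th power of a divisor quadratic in the ramification data, $J_g(a)$ depends on $a$ as a homogeneous polynomial of degree $2g$, i.e.\ $J_g(a)=c_g a^{2g}$; the constant $c_g=\tfrac{|B_{2g}|}{(2g)!}$ can be read off by matching the $\hbar^0$ part of the $n=2$ contribution against the known classical ILW Hamiltonian \eqref{eq:classical ILW} of \cite{Bur14,Bur13} (or cited directly from \cite{BSSZ12}). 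Rewriting $\sum_a a^{2g}p_ap_{-a}=(-1)^g\int uu_{2g}dx$, the $\hbar^0$ part then reproduces $\sum_{g\ge1}\eps^{2g}\mu^{g-1}\frac{|B_{2g}|}{2(2g)!}uu_{2g}$, and the $\hbar^1$ part — after collecting the overall sign $(-1)^{g-1}(-1)^g=-1$ — becomes $-i\hbar\sum_{g\ge1}\eps^{2g-2}\mu^g\frac{|B_{2g}|}{2(2g)!}uu_{2g}$.

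Finally I would treat the $n=1$ terms, where $a_1=0$ and $DR_g(0,0)=(-1)^g\lambda_g$ on $\oM_{g,2}$ (exactly as was used in Section~\ref{subsubsection:full quantum KdV}), so the integrand $(-1)^g\lambda_g\lambda_j\lambda_k\psi_1$ vanishes as soon as $j=g$ or $k=g$, because $\lambda_g^2=0$, and also when $j=k=g-1$ by Mumford's relation $\lambda_{g-1}^2=2\lambda_g\lambda_{g-2}$. Since $j+k=2g-2$ leaves only these possibilities for $g\ge2$, the entire $n=1$ contribution vanishes for $g\ge2$; the same $\lambda_g^2=0$ mechanism kills every term with $g-j\ge2$ (the dimension count forces such a term to contain a $\lambda_g^2$ after multiplying by $DR_g(0,0)$), so $\oG_1$ carries no powers of $\hbar$ beyond the first. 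The lone survivor is $g=1$, $(j,k)=(0,0)$, contributing $i\hbar\big(\int_{DR_1(0,0)}\psi_1\big)p_0=-i\hbar\big(\int_{\oM_{1,2}}\psi_1\lambda_1\big)p_0=-\tfrac{i\hbar}{24}\int u\,dx$. Adding the $n=3$, $n=2$ and $n=1$ pieces gives the stated formula. The hard part here is not conceptual but the bookkeeping: tracking the interlocking weights in $\eps$, $\hbar$, $\mu$ and the signs in the $n=2$ step, and making the $n=1$ and higher-$\hbar$ vanishings airtight; all the geometric input ($\dim DR_g$, $DR_g(0,0)=(-1)^g\lambda_g$, $\lambda_g^2=0$, Mumford's relation, Hain's formula, and the value of $J_g$) is standard or already invoked in the text.
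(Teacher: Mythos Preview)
Your proposal is correct and follows essentially the same route as the paper: a dimension count on $\oM_{g,n+1}$ reducing to a handful of $(g,n,j,k)$ cases, then $\lambda_g^2=0$ and Mumford's relation $\lambda_{g-1}^2=2\lambda_g\lambda_{g-2}$ to dispose of the $n=1$ terms for $g\ge 2$, together with the Bernoulli-number evaluation of $\int_{DR_g(0,a,-a)}\psi_1\lambda_g\lambda_{g-1}$ (which the paper cites from~\cite{CMW12}) for the $n=2$ contribution. The only cosmetic difference is organizational: the paper takes the classical part~\eqref{eq:classical ILW} as given and computes only the quantum corrections, whereas you redo the full expansion and then use~\eqref{eq:classical ILW} to pin down the constant~$c_g$.
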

\begin{proof}
Let us compute the quantum correction to the classical part~\eqref{eq:classical ILW}. For this we have to compute the integrals
$$
\int_{DR_g(0,a_1,\ldots,a_n)}\psi_1\lambda_k\lambda_{g-j},
$$
where $g,j,n\ge 1$ and $k\le g$. The dimension constraint implies that
$$
g=k-j-n+3.
$$
Therefore, we can only have the following possibilities:
\begin{enumerate}
\item[1.] $j=n=1$ and $k=g-1$.
\item[2.] $j=1$, $n=2$ and $k=g$.
\item[3.] $j=2$, $n=1$ and $k=g$.
\end{enumerate}

Consider case 1. We come to the integral
$$
\int_{DR_g(0,0)}\psi_1\lambda_{g-1}^2.
$$
We have $DR_g(0,0)=(-1)^g\lambda_g$. We also have $\lambda_{g-1}^2=2\lambda_g\lambda_{g-2}$, if $g\ge 2$. Since $\lambda_g^2=0$, we conclude that the last integral vanishes, if $g\ge 2$. If $g=1$, then it is equal to $-\frac{1}{24}$ and the corresponding quantum correction is given by
\begin{gather}\label{eq:correction1}
-\int\frac{i\hbar}{24}u dx.
\end{gather}

Consider case 2. We have (see e.g. \cite{CMW12})
$$
\int_{DR_g(0,a,-a)}\psi_1\lambda_g\lambda_{g-1}=a^{2g}\frac{|B_{2g}|}{(2g)!}.
$$
The corresponding quantum correction is equal to
\begin{gather}\label{eq:correction2}
-i\hbar\int\left(\sum_{g\ge 1}\eps^{2g-2}\mu^g\frac{|B_{2g}|}{2(2g)!}uu_{2g}\right)dx.
\end{gather}

Consider case 3. Since $\lambda_g^2=0$, we have
$$
\int_{DR_g(0,0)}\psi_1\lambda_g\lambda_{g-2}=0.
$$
So, there is no quantum correction in this case. Collecting corrections~\eqref{eq:correction1} and~\eqref{eq:correction2} we get the statement of the lemma.
\end{proof}


\subsection{Quantum extended Toda hierarchy}

Consider the cohomological field theory corresponding to the Gromov-Witten theory of~$\CP1$. In~\cite{BR14} we proved that the double ramification hierarchy in this case is equivalent to the extended Toda hierarchy of \cite{CDZ04} by a composition of a certain Miura transformation with a simple triangular transformation. In Section~\ref{subsubsection:quantum for CP1} we compute the Hamiltonian~$\oG_{1,1}$ of the quantum double ramification hierarchy. In Section~\ref{subsubsection:quantum Toda} we discuss how to transform the quantum double ramification hierarchy in order to get a quantization of the extended Toda hierarchy.

We will use our notations from~\cite[Section 6]{BR14} throughout this section.

\subsubsection{Quantum double ramification hierarchy for $\CP1$}\label{subsubsection:quantum for CP1}

\begin{lemma}
We have
\begin{align*}
\oG_{1,1}=\int&\left(\frac{(u^1)^2u^\omega}{2}+\sum_{g\ge 1}\eps^{2g}\frac{B_{2g}}{(2g)!}u^1u^1_{2g}+q\left(\frac{e^{\frac{\eps\d_x}{2}}+e^{-\frac{\eps\d_x}{2}}}{2}u^\omega-2\right)e^{S(\eps\d_x)u^\omega}+q u^\omega\right.\\
&\left.-\frac{i\hbar}{12}u^1+i\hbar\sum_{g\ge 1}\eps^{2g-2}\frac{B_{2g}}{(2g)!}u^\omega_{2g} u^1\right)dx.
\end{align*}
\end{lemma}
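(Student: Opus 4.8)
The strategy is to compute $\oG_{1,1}$ for the Gromov–Witten theory of $\CP1$ directly from the definition \eqref{density}, using the reconstruction principle of Section~\ref{subsection:reconstruction}: it suffices to evaluate the single Hamiltonian $\oG_{1,1}$, so I need the intersection numbers
$$
\int_{DR_g(0,a_1,\dots,a_n)}\Lambda\!\left(\tfrac{-\eps^2}{i\hbar}\right)\psi_1\, c_{g,n+1}\!\left(e_1\otimes\otimes_{i=1}^n e_{\alpha_i}\right)
$$
with $\alpha_i\in\{1,\omega\}$, where $c_{g,n}$ is the CohFT of $\CP1$. First I would split the sum according to the power of $\lambda_k$ picked up from $\Lambda$ — the $k=0$ term reproduces the classical Hamiltonian $\og_{1,1}$ computed in \cite[Section~6]{BR14}, whose form (the first line minus the quantum pieces, together with the $q$-dependent exponential terms) is known, so the work is entirely in the terms carrying at least one explicit factor $\lambda_k$, $k\ge 1$, i.e. the genuinely quantum corrections of order $(i\hbar)^{\ge 1}\eps^{2k}$.

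For these corrections the key input is the dimension constraint: on $\oM_{g,n+1}$ one has $\dim=3g-3+(n+1)$, and the integrand carries $\deg = g$ (from $DR_g$) $+\,1$ (from $\psi_1$) $+\,k$ (from $\lambda_k$) plus whatever degree the CohFT class $c_{g,n+1}$ contributes. Using that for $\CP1$ the relevant structure constants of $c_{g,n}$ are low-degree (the genus-zero three-point function gives $\eta$, and the string/divisor/dilaton-type relations together with $c_{g,n+1}(e_1\otimes\cdots)=\pi^*c_{g,n}(\cdots)$ collapse most insertions), the constraint forces $k$, $g$ and $n$ into a short finite list of cases, exactly as in the ILW computation in the preceding subsection. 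In each surviving case I would reduce to a single closed intersection number: the pure-$\psi$ integrals $\int_{DR_g(-\sum a_i,a_1,\dots,a_n)}\psi_1^{2g-2+n}=\Coef_{z^{2g}}\big(\prod S(a_iz)/S(z)\big)$ from \cite{BSSZ12} for the $\lambda_g$-leading pieces (these produce the $e^{S(\eps\d_x)u^\omega}$ and $\tfrac{e^{\eps\d_x/2}+e^{-\eps\d_x/2}}{2}u^\omega$ shapes after the change of variables $u^\alpha_s=\sum_k (ik)^s p^\alpha_k e^{ikx}$ and resumming in the $a_i$), the Hodge-integral evaluation $\int_{DR_g(0,a,-a)}\psi_1\lambda_g\lambda_{g-1}=a^{2g}\tfrac{|B_{2g}|}{(2g)!}$ from \cite{CMW12} for the pieces producing the $\sum_{g\ge 1}\eps^{2g-2}B_{2g}/(2g)!\,u^\omega_{2g}u^1$ term, and the elementary $\int_{\oM_{1,2}}\psi_1\lambda_1=\tfrac1{24}$ for the $-\tfrac{i\hbar}{12}u^1$ constant-type correction (note the factor differs from the KdV case by the normalization of the unit $u^1$ in the two-dimensional Frobenius algebra of $\CP1$).

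Finally I would reassemble: substitute each evaluated integral back, perform the $\sum_{a_i\in\mbZ}$ resummation to pass from Fourier modes $p^\alpha_{a_i}$ to jet variables $u^\alpha_s$, and check that the exponential generating functions in $\eps\d_x$ combine into precisely the claimed closed form (this is where the identity $S(z)=\tfrac{e^{z/2}-e^{-z/2}}{z}$ does the bookkeeping). \textbf{The main obstacle} is the case analysis for the $\lambda_k$-corrections with $n=2$ and general $g$: here two marked points carry $u^\omega$, the CohFT class $c_{g,3}(e_1\otimes e_\omega\otimes e_\omega)$ can itself contribute positive degree, and one must be careful that the only surviving contributions are the $\lambda_g\lambda_{g-1}$ Hodge integrals — ruling out the a priori possible $\lambda_g\lambda_{g-2}$ and $\lambda_{g-1}^2$ terms requires the relation $\lambda_{g-1}^2=2\lambda_g\lambda_{g-2}$ together with $\lambda_g^2=0$ and $DR_g(0,0)=(-1)^g\lambda_g$, exactly as in case~1 and case~3 of the ILW lemma, and this vanishing is what pins down the coefficient $B_{2g}/(2g)!$ (rather than $|B_{2g}|/(2g)!$) in the quantum term.
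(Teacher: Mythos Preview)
Your overall strategy---take the classical part $\og_{1,1}$ from \cite{BR14} as known and isolate the quantum corrections via a dimension count plus Hodge-integral identities---is the paper's approach. But you have the $\Lambda$-expansion backwards. Since $(i\hbar)^g\cdot(-\eps^2/(i\hbar))^k=(i\hbar)^{g-k}(-\eps^2)^k$, the classical limit $\hbar\to 0$ selects the $\lambda_g$ term in each genus, \emph{not} $\lambda_0$; the quantum corrections are the integrals $\int_{DR_g(0,a_1,\ldots,a_n)}\psi_1\lambda_{g-j}\,c_{g,n+1,d}(e_1\otimes\cdots)$ with $j\ge 1$. In particular the pure-$\psi$ formula from \cite{BSSZ12} you invoke is for the $\lambda_0$ piece, not the $\lambda_g$ piece, and does not compute the classical $e^{S(\eps\d_x)u^\omega}$ terms here.

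More seriously, your plan is missing the structural input that makes the $\CP1$ computation work. The dimension constraint first forces $d=0$ (so there are \emph{no} $q$-dependent quantum corrections at all---a point you do not address), and in degree zero the Gromov--Witten classes are known explicitly:
\[
c_{g,a+b,0}(1^{\otimes a}\otimes\omega^{\otimes b})=
\begin{cases}
2(-1)^{g-1}\lambda_{g-1},& b=0,\\
(-1)^g\lambda_g,& b=1,\\
0,& b\ge 2.
\end{cases}
\]
This formula is what supplies the ``extra'' Hodge class in every surviving case. For the main contribution ($j=1$, $n=2$) the insertions are one $e_1$ and one $e_\omega$, the CohFT class itself equals $(-1)^g\lambda_g$, and combined with the $\lambda_{g-1}$ coming from $\Lambda$ this yields $\int_{DR_g(0,a,-a)}\psi_1\lambda_g\lambda_{g-1}$. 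For $j=n=1$ the CohFT contributes $2(-1)^{g-1}\lambda_{g-1}$ and $DR_g(0,0)=(-1)^g\lambda_g$, giving $-2\int_{\oM_{g,2}}\lambda_g\lambda_{g-1}^2\psi_1$, to which Mumford's relation applies (whence the $-\tfrac{i\hbar}{12}u^1$, the factor $2$ coming from the class, not from any ``normalization of the unit''). And your stated ``main obstacle''---the $n=2$ case with two $e_\omega$ insertions---is in fact a non-issue: it has $b=2$ and vanishes identically by the formula above, with no appeal to Mumford needed. Without this formula you cannot explain where the second $\lambda$-class in $\lambda_g\lambda_{g-1}$ comes from, nor correctly organize which cases survive.
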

\begin{proof}
The classical part $\left.\oG_{1,1}\right|_{\hbar=0}$ was computed in~\cite{BR14}. Let us compute the quantum correction. We have to compute the integrals
\begin{gather}\label{eq:Toda integrals}
\int_{DR_g(0,a_1,\ldots,a_n)}\psi_1\lambda_{g-j} c_{g,n+1,d}(1\otimes\otimes_{i=1}^n\gamma_i),\quad \gamma_i\in\{1,\omega\},
\end{gather}
where $g,j,n\ge 1$.
Denote by $\deg$ the cohomological degree. We have
$$
\deg c_{g,n+1,d}\left(1\otimes\otimes_{i=1}^n\gamma_i\right)=2(g-1-2d)+\sum_{i=1}^n\deg\gamma_i.
$$
Therefore, the integral~\eqref{eq:Toda integrals} is zero unless
\begin{gather*}
2g-2+n=1+g-j+g-1-2d+\frac{1}{2}\sum\deg\gamma_i\quad\Leftrightarrow\quad j=2-2d+\frac{1}{2}\sum\deg\gamma_i-n.
\end{gather*}
Since $\deg\gamma_i\in\{0,2\}$, we immediately conclude that $d=0$ and $j\le 2$. Therefore,
$$
j=2+\frac{1}{2}\sum\deg\gamma_i-n.
$$
We have the following formula:
\begin{gather}\label{eq:degree zero}
c_{g,a+b,0}(1^{\otimes a}\otimes\omega^{\otimes b})=
\begin{cases}
2(-1)^{g-1}\lambda_{g-1},&\text{if $b=0$},\\
(-1)^g\lambda_g,&\text{if $b=1$},\\
0,&\text{otherwise}.
\end{cases}
\end{gather}
This implies that $n\le 2$. We have several cases:
\begin{enumerate}
\item[1.] $j=n=1$.
\item[2.] $j=1$ and $n=2$.
\item[3.] $j=2$.
\end{enumerate}

Consider case 1. We have to compute the integrals
$$
\int_{DR_g(0,0)}\psi_1\lambda_{g-1}c_{g,2,0}\left(1^{\otimes 2}\right)=(-1)^g\int_{\oM_{g,2}}\lambda_g \psi_1\lambda_{g-1}c_{g,2,0}\left(1^{\otimes 2}\right)=-2\int_{\oM_{g,2}}\lambda_g\lambda_{g-1}^2\psi_1.
$$
We have $\lambda_{g-1}^2=2\lambda_{g-2}\lambda_g$, if $g\ge 2$. Therefore, the last integral vanishes, if $g\ge 2$. If $g=1$, then it is equal to~$-\frac{1}{12}$ and the corresponding quantum correction is
$$
-\int\frac{i\hbar}{12}u^1dx.
$$ 

Consider case 2. The corresponding quantum correction is given by
\begin{gather*}
\sum_{g\ge 1}i\hbar(-\eps^2)^{g-1}\sum_{a\in\mbZ}\left(\int_{DR_g(0,a,-a)}\psi_1\lambda_{g-1}(-1)^g\lambda_g\right)p^\omega_a p^1_{-a}=i\hbar\int\sum_{g\ge 1}\eps^{2(g-1)}\frac{B_{2g}}{(2g)!}u^\omega_{2g} u^1 dx.
\end{gather*}

Consider case 3. We get that all $\gamma_i$'s are equal to $\omega$. By~\eqref{eq:degree zero}, the integral~\eqref{eq:Toda integrals} is zero, if $n=2$. Suppose $n=1$, then 
$$
\int_{DR_g(0,0)}\psi_1\lambda_{g-2}c_{g,2,0}\left(1\otimes\omega\right)=\int_{\oM_{g,2}}\lambda^2_g \psi_1\lambda_{g-2}=0.
$$
We see that this case gives no contribution. The lemma is proved.
\end{proof}

\subsubsection{Quantization of the extended Toda hierarchy}\label{subsubsection:quantum Toda}

According to~\cite{BR14} the double ramification hierarchy is related to the extended Toda hierarchy in the following way. Consider a Miura transformation
\begin{gather}\label{eq:Miura for Toda}
v^1(u)=e^{\frac{\eps\d_x}{2}}u^1,\qquad v^2(u)=S(\eps\d_x)u^\omega.
\end{gather}
Denote by $\og_{\alpha,d}[v]$ the Hamiltonians of the double ramification hierarchy rewritten in the jet variables~$v^\alpha_s$. Then the Hamiltonians $\oh^{Td}_{\alpha,p}[v]$ of the extended Toda hierarchy are related to~$\og_{\alpha,p}[v]$ by the following triangular transformation:
\begin{gather}\label{eq:triangular}
\oh^{Td}_{\alpha,p}[v]=\sum_{i=0}^{p+1}(S_i)^\mu_\alpha\og_{\mu,p-i}[v],\quad p\ge -1,
\end{gather}
where the matrices $S_i$ were defined in~\cite[Section 6.1.2]{BR14}. 

We can easily see that the Miura transformation~\eqref{eq:Miura for Toda} naturally induces a map between the Weyl algebra in the variables $p^\alpha_n$ and a Weyl algebra in variables $\tp^\alpha_n$ with a deformed commutator. Indeed, introduce Fourier components of the fields $v^\alpha(x)$:
$$
v^\alpha(x)=\sum_{n\in\mbZ}\tp^\alpha_n e^{inx}.
$$ 
Then the Miura transformation~\eqref{eq:Miura for Toda} induces a map from the Weyl algebra in the variables $p^\alpha_n$ to the Weyl algebra in the variables $\tp^\alpha_n$ by
\begin{gather}\label{eq:quantum Miura}
p^1_n\mapsto \tp^1_n(p)=e^{\frac{in\eps}{2}}p^1_n,\qquad p^\omega_n\mapsto \tp^2_n(p)=S(in\eps)p^\omega_n. 
\end{gather}
The variables $\tp^\alpha_n$ satisfy a deformed commutation relation
$$
[\tp^\alpha_m,\tp^\beta_n]=\hbar\frac{e^{im\eps}-1}{\eps}\delta_{m+n,0}\eta^{\alpha\beta}.
$$ 
If we apply the map~\eqref{eq:quantum Miura} to the quantum Hamiltonians $\oG_{\alpha,p}$ and then compose it with the triangular transformation~\eqref{eq:triangular}, we get quantized Hamiltonians of the extended Toda hierarchy. For example, the quantized Hamiltonian~$\overline H^{Td}_{1,1}[v]$ is equal to
\begin{align*}
\overline H^{Td}_{1,1}[v]=\int&\left(\frac{(v^1)^2}{2}B_-(\eps\d_x)v^2+\left(\frac{\eps\d_x}{2}\coth\left(\frac{\eps\d_x}{2}\right)-1\right)v^1\cdot \left(v^1+\frac{i\hbar}{\eps^2}B_-(\eps\d_x)v^2\right)+\right.\\
&\left.+q\left(\frac{\eps\d_x}{2}\coth\left(\frac{\eps\d_x}{2}\right)v^2-2\right)e^{v^2}-\frac{i\hbar}{12}v^1\right)dx,
\end{align*}
where $B_-(z):=\frac{z}{1-e^{-z}}$.
 

{
\appendix

\section{Sums of powers}

In this section we collect some facts about sums of powers. We also prove that a product of derivatives of $\delta_+(x)$ can be expressed as a linear combination of derivatives of $\delta_+(x)$.

\subsection{Sums of powers}\label{subsection:sums of powers}

Let $k\ge 1$ be a positive integer and $d_1,d_2,\ldots,d_k$ be non-negative integers. For any $N\ge 0$ let
$$
\tC^{d_1,d_2,\ldots,d_k}(N):=\sum_{\substack{a_1,\ldots,a_k\in\mbZ_{\ge 0}\\a_1+\ldots+a_k=N}}a_1^{d_1}\ldots a_k^{d_k}.
$$
Here we, by definition, put $0^0:=1$.

\begin{lemma}\label{lemma:sums of powers}
1. The function~$\tC^{d_1,\ldots,d_k}(N)$ is a polynomial in~$N$ with rational coefficients.\\
2. The degree of this polynomial is~$k-1+\sum_{i=1}^k {d_i}$ and the top coefficient is equal to~$\frac{\prod_{i=1}^k d_i!}{(k-1+\sum_{i=1}^k d_i)!}$.
\end{lemma}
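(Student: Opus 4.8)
The statement is purely combinatorial, so the plan is to reduce it to two elementary facts about sums of powers and then track the leading term. I would prove both parts at once by induction on $k$, using the recursion obtained by separating off the value of the last summation variable,
$$\tC^{d_1,\ldots,d_k}(N)=\sum_{a=0}^N a^{d_k}\,\tC^{d_1,\ldots,d_{k-1}}(N-a)=\sum_{i=0}^N (N-i)^{d_k}\,\tC^{d_1,\ldots,d_{k-1}}(i).$$
The base case $k=1$ is immediate, since $\tC^{d_1}(N)=N^{d_1}$ is a polynomial of degree $d_1=1-1+d_1$ with top coefficient $1=d_1!/d_1!$.

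For the inductive step I would set $Q(i):=\tC^{d_1,\ldots,d_{k-1}}(i)$, which by the inductive hypothesis is a polynomial in $i$ of degree $D':=k-2+\sum_{i=1}^{k-1}d_i$ with leading coefficient $c':=\prod_{i=1}^{k-1}d_i!\big/D'!$. Expanding $(N-i)^{d_k}=\sum_{l=0}^{d_k}\binom{d_k}{l}(-1)^lN^{d_k-l}i^l$ and interchanging sums gives
$$\tC^{d_1,\ldots,d_k}(N)=\sum_{l=0}^{d_k}\binom{d_k}{l}(-1)^lN^{d_k-l}\sum_{i=0}^N i^l Q(i).$$
Since $i^lQ(i)$ is a polynomial in $i$ of degree $l+D'$ with leading coefficient $c'$, the standard fact that $\sum_{i=0}^N P(i)$ is a polynomial in $N$ of degree $\deg P+1$ whose leading coefficient is that of $P$ divided by $\deg P+1$ — a one-line consequence of $\sum_{i=0}^N\binom{i}{d}=\binom{N+1}{d+1}$ — shows that the inner sum is a polynomial in $N$ of degree $l+D'+1$ with leading coefficient $c'/(l+D'+1)$. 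Hence every term of the outer sum is a polynomial of degree $(d_k-l)+(l+D'+1)=d_k+D'+1=:D$, independently of $l$, so $\tC^{d_1,\ldots,d_k}$ is a polynomial in $N$ of degree at most $D=k-1+\sum_{i=1}^kd_i$.

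To finish I would compute the coefficient of $N^D$ by summing the leading contributions of the $d_k+1$ terms, which equals
$$c'\sum_{l=0}^{d_k}\binom{d_k}{l}\frac{(-1)^l}{l+D'+1}=c'\int_0^1 x^{D'}(1-x)^{d_k}\,dx=c'\,\frac{D'!\,d_k!}{(D'+d_k+1)!}=\frac{\prod_{i=1}^{k-1}d_i!}{D'!}\cdot\frac{D'!\,d_k!}{D!}=\frac{\prod_{i=1}^kd_i!}{D!},$$
using the Beta integral $\int_0^1 x^a(1-x)^b\,dx=a!\,b!/(a+b+1)!$. Because this is nonzero, the degree is exactly $D$ and the top coefficient is the asserted $\prod_{i=1}^kd_i!\big/(k-1+\sum d_i)!$, closing the induction.

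I do not expect a genuine obstacle here; the calculation is routine once the recursion and the two auxiliary facts are in place. The only points needing a moment's care are that polynomiality must be checked for every $N\ge 0$ rather than merely asymptotically — automatic above, since each step is an identity of functions on $\mbZ_{\ge 0}$ — and that one must rule out an accidental cancellation of the top term in the outer sum, which is exactly what the Beta-integral evaluation does. A slicker alternative I could also use is the generating-function identity $\sum_{N\ge 0}\tC^{d_1,\ldots,d_k}(N)t^N=\prod_{i=1}^k\big(\sum_{a\ge 0}a^{d_i}t^a\big)=\prod_{i=1}^kA_{d_i}(t)\big/(1-t)^{k+\sum d_i}$, where $A_d$ is the Eulerian polynomial with $A_d(1)=d!$ and $\deg A_d\le d$; then the claim drops out of the standard partial-fraction expansion of $f(t)/(1-t)^m$ with $f$ a polynomial and $f(1)\neq 0$, the bound on $\deg A_d$ ensuring the coefficient sequence agrees with a polynomial for all $N\ge 0$.
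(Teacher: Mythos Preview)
Your proof is correct and takes a genuinely different route from the paper. The paper argues via generating functions: it writes $\sum_{N\ge 0}\tC^{d_1,\ldots,d_k}(N)z^N=\prod_{i=1}^k\Li_{-d_i}(z)$, expresses each polylogarithm as $d!/(1-z)^{d+1}$ plus lower-order poles at $z=1$, multiplies out, and reads off polynomiality and the top coefficient from the expansion of $1/(1-z)^m$. Your primary argument is instead a direct induction on $k$, using the convolution recursion together with Faulhaber's formula and the Beta integral to track the leading term. Both arguments are short; yours is slightly more elementary in that it avoids rational-function manipulations, while the paper's approach has the advantage that the polylogarithm identity $\Li_{-d}(1/z)=(-1)^{d+1}\Li_{-d}(z)$ is immediately reused in the next lemma to prove the parity statement about the coefficients $\tC^{d_1,\ldots,d_k}_j$. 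Your closing remark about the Eulerian-polynomial generating function is essentially the paper's method in different notation.
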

\begin{proof}
For $d\ge 0$, the polylogarithm $\Li_{-d}(z)$ is defined by 
$$
\Li_{-d}(z):=\sum_{k\ge 0}k^d z^k.
$$
Note that in the case $d=0$ our definition is slightly different from the standard one. It is easy to see that
$$
\prod_{i=1}^{k}\Li_{-d_i}(z)=\sum_{N\ge 0}\tC^{d_1,\ldots,d_k}(N)z^N.
$$
We have
\begin{gather}\label{eq:formula for the polylogarithm}
\Li_{-d}(z)=\left(z\frac{d}{d z}\right)^d\frac{1}{1-z}=d!\frac{z^d}{(1-z)^{d+1}}+\sum_{i=2}^d a_{i,d}\frac{z^{i-1}}{(1-z)^i},
\end{gather}
where $a_{i,d}$ are some integers. Applying the equation $\frac{z}{1-z}=\frac{1}{1-z}-1$ sufficiently many times, we can express $\Li_{-d}(z)$ in the following way:
$$
\Li_{-d}(z)=\frac{d!}{(1-z)^{d+1}}+\sum_{i=1}^d\frac{b_{i,d}}{(1-z)^i},
$$
for some integers~$b_{i,d}$. Therefore, we have
\begin{gather}\label{eq:product of polylogarithms}
\prod_{i=1}^k\Li_{-d_i}(z)=\frac{\prod_{i=1}^k d_i!}{(1-z)^{k+\sum d_i}}+\sum_{i=1}^{k-1+\sum d_i}\frac{e_{i,d_1,\ldots,d_k}}{(1-z)^i},\qquad e_{i,d_1,\ldots,d_k}\in\mbZ.
\end{gather}
For any $d\ge 1$, we have
$$
\frac{1}{(1-z)^d}=\sum_{i\ge 0}\frac{(i+d-1)(i+d-2)\ldots(i+1)}{(d-1)!}z^i.
$$
From this formula it follows that the coefficient of~$z^N$ on the right-hand side of~\eqref{eq:product of polylogarithms} is a polynomial in~$N$. The second part of the lemma is also clear now.
\end{proof}

Denote by $\tC^{d_1,\ldots,d_k}_j$ the coefficients of the polynomial $\tC^{d_1,\ldots,d_k}(N)$:
\begin{gather}\label{eq:C-polynomial}
\tC^{d_1,\ldots,d_k}(N)=\sum_{j=0}^{k-1+\sum d_i}\tC^{d_1,\ldots,d_k}_j N^j.
\end{gather}
By the previous lemma, the coefficients $\tC^{d_1,\ldots,d_k}_j$ are rational. Note that $\tC^{d_1,\ldots,d_k}_0=0$, if at least one~$d_i$ is non-zero. 
 
\begin{lemma}\label{lemma:parity}
Suppose that $d_1,\ldots,d_k\ge 1$. Then the coefficient $\tC^{d_1,\ldots,d_k}_j$ is equal to zero, if $j\ne k-1+\sum d_i\ (\mathrm{mod}\ 2)$.
\end{lemma}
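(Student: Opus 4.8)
The plan is to prove the parity statement for $\tC^{d_1,\ldots,d_k}_j$ by exploiting the symmetry $a_i \mapsto -a_i$ in a suitable extension of the defining sum, just as one does for ordinary power sums (Faulhaber's formula). Concretely, I would first consider the ``two-sided'' power sum
$$
\widetilde{D}^{d_1,\ldots,d_k}(N):=\sum_{\substack{a_1,\ldots,a_k\in\mbZ\\ a_1+\ldots+a_k=N}}' a_1^{d_1}\cdots a_k^{d_k},
$$
or better work directly with the generating function. Using $\Li_{-d}(1/z)=(-1)^{d+1}\Li_{-d}(z)$ for $d\ge 1$ (which follows from the functional equation of the polylogarithm, valid as an identity of rational functions since $\Li_{-d}(z)=d!\,z^d/(1-z)^{d+1}+\ldots$ is rational), one gets, for $d_1,\ldots,d_k\ge 1$,
$$
\prod_{i=1}^k \Li_{-d_i}(1/z) = (-1)^{k+\sum d_i}\prod_{i=1}^k \Li_{-d_i}(z).
$$

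Next I would translate this functional equation into a statement about the coefficients $\tC^{d_1,\ldots,d_k}(N)$. Write $F(z):=\prod_i \Li_{-d_i}(z)=\sum_{N\ge 0}\tC^{d_1,\ldots,d_k}(N)z^N$, a rational function with a single pole at $z=1$ of order $m:=k+\sum d_i$. The relation $F(1/z)=(-1)^m F(z)$ together with the partial fraction expansion \eqref{eq:product of polylogarithms} lets me compare the polynomial $P(N):=\tC^{d_1,\ldots,d_k}(N)$ of degree $m-1$ with its ``reflection.'' The cleanest route: from $\frac{1}{(1-z)^d}$ one reads off that the coefficient of $z^N$ equals $\binom{N+d-1}{d-1}$, a polynomial in $N$ of degree $d-1$; reflecting $z\mapsto 1/z$ sends $\frac{1}{(1-z)^d}$ to $\frac{(-z)^d}{(1-z)^d}$, whose coefficient extraction shifts $N\mapsto N-d$ up to sign $(-1)^d$. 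Running this through the finite sum \eqref{eq:product of polylogarithms} and matching with $F(1/z)=(-1)^m F(z)$ yields an identity of the form $P(-N-c)=(-1)^{m-1}P(N)$ for an explicit integer shift $c$ (in fact $c=0$ is forced by symmetry of the range, since $\{a_i\in\mbZ_{\ge0}:\sum a_i=N\}$ is not symmetric but the analytic continuation of the polynomial is). I would verify that the correct reflection is $P(-N)=(-1)^{m-1}P(N)$ — this is exactly the Faulhaber-type symmetry, and it is most transparently checked against $k=1$, where $\tC^{d}(N)=\sum_{a=0}^{N}a^d$ satisfies the classical Faulhaber symmetry $\sum_{a=0}^{-N}a^d = (-1)^{d+1}\sum_{a=0}^{N-1}a^d$, i.e. the Bernoulli-polynomial identity $B_{d+1}(1-x)=(-1)^{d+1}B_{d+1}(x)$.

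Finally, once $P(-N)=(-1)^{m-1}P(N)$ is established with $m-1 = k-1+\sum d_i$, the conclusion is immediate: writing $P(N)=\sum_j \tC^{d_1,\ldots,d_k}_j N^j$, the relation forces $\tC^{d_1,\ldots,d_k}_j\,(-1)^j = (-1)^{m-1}\tC^{d_1,\ldots,d_k}_j$ for every $j$, so $\tC^{d_1,\ldots,d_k}_j=0$ whenever $(-1)^j\ne(-1)^{m-1}$, that is whenever $j\not\equiv k-1+\sum d_i \pmod 2$. The main obstacle I anticipate is pinning down the exact reflection (the precise shift in the argument of $P$ and the precise sign) when passing from the generating-function functional equation to the polynomial identity; the $k=1$ Bernoulli case is the right sanity check, and the multiplicativity of $\prod_i\Li_{-d_i}$ under $z\mapsto 1/z$ reduces the general case to it without further combinatorics. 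An alternative, possibly shorter, route avoiding polylogarithms entirely: express $P(N)$ via iterated summation as a composition of the single-variable summation operators and invoke the classical Faulhaber symmetry $k$ times, each application contributing one factor of $(-1)^{d_i+1}$ together with the $k-1$ nested summations each contributing $(-1)^{1}$ to the total parity shift — bookkeeping that again lands on $(-1)^{k-1+\sum d_i}$.
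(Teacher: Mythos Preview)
Your starting point---the identity $\Li_{-d}(1/z)=(-1)^{d+1}\Li_{-d}(z)$ for $d\ge 1$, yielding $F(1/z)=(-1)^{k+\sum d_i}F(z)$ for $F(z):=\prod_i\Li_{-d_i}(z)$---is exactly what the paper uses. The difference is in how you finish, and there your argument has a real gap.

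You aim to translate $F(1/z)=(-1)^m F(z)$ into the polynomial reflection $P(-N)=(-1)^{m-1}P(N)$ via partial fractions and coefficient extraction, and you correctly flag that ``pinning down the exact reflection'' is the main obstacle. But notice that $P(-N)=(-1)^{m-1}P(N)$ is \emph{equivalent} to the statement of the lemma, so ``verifying'' it is the entire content of the proof; the partial-fraction route you sketch (tracking what $z\mapsto 1/z$ does to each $1/(1-z)^d$, then reassembling) is not carried out and, since $F(1/z)$ is not a power series at $z=0$, comparing coefficients requires extra care about analytic continuation of the binomial polynomials. Your iterated-Faulhaber alternative is likewise only a heuristic.

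The paper closes this gap with a one-line observation you are missing: by the very definition of the coefficients $\tC_j$ one has
\[
F(z)=\sum_{N\ge 0}P(N)z^N=\sum_{N\ge 0}\Bigl(\sum_j \tC_j N^j\Bigr)z^N=\sum_{j\ge 1}\tC_j\,\Li_{-j}(z).
\]
Now apply the \emph{same} functional equation to the right-hand side: under $z\mapsto 1/z$ the left side picks up $(-1)^{k+\sum d_i}$ while the right side becomes $\sum_j(-1)^{j+1}\tC_j\Li_{-j}(z)$. Linear independence of the rational functions $\{\Li_{-j}\}_{j\ge 1}$ then forces $\tC_j=(-1)^{k-1+\sum d_i-j}\tC_j$, which is the lemma. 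So the clean completion of your own approach is not partial fractions but simply to expand in the $\Li_{-j}$ basis on both sides.
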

\begin{proof}
We have
\begin{gather}\label{eq:decomposition}
\prod_{i=1}^k\Li_{-d_i}(z)=\sum_{j=1}^{k-1+\sum d_i}\tC^{d_1,\ldots,d_k}_j\Li_{-j}(z).
\end{gather}
From~\eqref{eq:formula for the polylogarithm} it follows that the polylogarithm~$\Li_{-d}(z)$ is a rational function in~$z$. Moreover, from the middle part of equation~\eqref{eq:formula for the polylogarithm} it follows that
\begin{gather}\label{eq:map z to 1/z}
\Li_{-d}\left(\frac{1}{z}\right)=(-1)^{d+1}\Li_{-d}(z),\quad\text{if $d\ge 1$}.
\end{gather}
Let us replace~$z$ by~$\frac{1}{z}$ in equation~\eqref{eq:decomposition}. Then we get
$$
(-1)^{k+\sum d_i}\prod_{i=1}^k\Li_{-d_i}(z)=\sum_{j=1}^{k-1+\sum d_i}(-1)^{j+1}\tC^{d_1,\ldots,d_k}_j\Li_{-j}(z).
$$
Therefore, we obtain
$$
\sum_{j=1}^{k-1+\sum d_i}\tC^{d_1,\ldots,d_k}_j\Li_{-j}(z)=\sum_{j=1}^{k-1+\sum d_i}(-1)^{k-1+\sum d_i-j}\tC^{d_1,\ldots,d_k}_j\Li_{-j}(z).
$$
From~\eqref{eq:formula for the polylogarithm} it also follows that the functions~$\Li_{-d}(z), d\ge 0$, are linearly independent. We conclude that the coefficient $\tC^{d_1,\ldots,d_k}_j$ can be non-zero only if $j=k-1+\sum d_i\ (\mathrm{mod}\ 2)$.
\end{proof}

Now we want to prove an auxiliary lemma that we will use in Appendix~\ref{section:polynomiality}.
Let $k\ge 1$ and $n\ge 0$. Consider variables $a_1,\ldots,a_n$, $b_1,\ldots,b_k$ and $b$. Let
$$
A:=(a_1,\ldots,a_n),\quad\text{and}\quad B:=(b_1,\ldots,b_k).
$$
For any polynomial $P(A,b,B)\in\mbC[a_1,\ldots,a_n,b,b_1,\ldots,b_k]$, define the function $S_{b,B}[P](A,N)$ by
\begin{gather*}
S_{b,B}[P](A,N):=\sum_{\substack{b,b_1,\ldots,b_k\in\mbZ_{\ge 1}\\b+\sum b_i=N}}P(A,b,B)\prod_{i=1}^k b_i+\frac{1}{2}\sum_{\substack{b_1,\ldots,b_k\in\mbZ_{\ge 1}\\\sum b_i=N}}P(A,0,B)\prod_{i=1}^k b_i,\quad N\ge 0.
\end{gather*}
From Lemma~\ref{lemma:sums of powers} it follows that the function~$S_{b,B}[P](A,N)$ is a polynomial in~$a_1,\ldots,a_n$ and $N$ of degree not more than~$2k+\deg P$. Note that
\begin{gather}\label{eq:zero of S}
S_{b,B}(A,0)=0.
\end{gather}

\begin{lemma}\label{lemma:lemma for S}
Suppose a polynomial $P(A,b,B)$ is even. Then the polynomial~$S_{b,B}[P](A,N)$ is also even.
\end{lemma}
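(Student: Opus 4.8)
The plan is to mimic the polylogarithm argument from the proof of Lemma~\ref{lemma:parity}, tracking the behaviour of the relevant generating function under $z\mapsto 1/z$. Since $P\mapsto S_{b,B}[P]$ and the notion of being even are both linear in $P$, it is enough to treat a single monomial $P = a^C b^{d_0} B^D$, written in multi-index notation $a^C = a_1^{c_1}\cdots a_n^{c_n}$, $B^D = b_1^{d_1}\cdots b_k^{d_k}$, with $|C| + d_0 + |D|$ even. For such $P$ one has $S_{b,B}[P](A,N) = a^C\, g_{d_0,D}(N)$, where
$$
g_{d_0,D}(N) := \sum_{\substack{b,b_1,\ldots,b_k\ge 1\\ b+\sum b_i = N}} b^{d_0}\prod_{i=1}^k b_i^{d_i+1} \;+\; \frac{\delta_{d_0,0}}{2}\sum_{\substack{b_1,\ldots,b_k\ge 1\\ \sum b_i = N}}\prod_{i=1}^k b_i^{d_i+1},
$$
which is a polynomial in $N$ with $g_{d_0,D}(0)=0$ by~\eqref{eq:zero of S}. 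Since $(-A)^C = (-1)^{|C|} a^C$ and $|C|\equiv d_0+|D|\pmod 2$, the lemma will follow from the identity $g_{d_0,D}(-N) = (-1)^{d_0+|D|}\,g_{d_0,D}(N)$.

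To establish this I would introduce the shifted polylogarithm $\widetilde{\Li}_{-d}(z) := \Li_{-d}(z)$ for $d\ge 1$ and $\widetilde{\Li}_0(z) := \Li_0(z) - \frac12 = \frac{1+z}{2(1-z)}$. A direct computation of generating series, using $0^0=1$ so that $\sum_{b\ge1}z^b = \Li_0(z)-1$, gives
$$
\sum_{N\ge 0} g_{d_0,D}(N)\,z^N \;=\; \widetilde{\Li}_{-d_0}(z)\prod_{i=1}^k \Li_{-(d_i+1)}(z).
$$
The reason for the $\frac12$-shift is that $\widetilde{\Li}_{-d}(1/z) = (-1)^{d+1}\,\widetilde{\Li}_{-d}(z)$ then holds uniformly for all $d\ge 0$: for $d\ge 1$ this is~\eqref{eq:map z to 1/z}, and for $d=0$ it is the elementary identity $\frac{1+z^{-1}}{2(1-z^{-1})} = -\frac{1+z}{2(1-z)}$. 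Multiplying the factors, the generating series of $g_{d_0,D}$ therefore gets multiplied by $(-1)^{d_0+1+\sum_{i=1}^k(d_i+2)} = (-1)^{d_0+|D|+1}$ under $z\mapsto 1/z$.

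Finally I would translate this symmetry of the generating function into the parity of $g_{d_0,D}$. For a polynomial $f$ with $f(0)=0$, write $\Phi(f)(z) := \sum_{N\ge 0}f(N)z^N$; expanding $f = \sum_{d\ge 1}\lambda_d N^d$ gives $\Phi(f)(z) = \sum_{d\ge 1}\lambda_d \Li_{-d}(z)$, so combining~\eqref{eq:map z to 1/z} with the linear independence of the functions $\Li_{-d}(z)$, $d\ge 1$ (already used in the proof of Lemma~\ref{lemma:parity}), one sees that $\Phi(f)(1/z) = -\Phi(f)(z)$ forces $\lambda_d=0$ for all odd $d$, i.e.\ $f$ is even, whereas $\Phi(f)(1/z) = \Phi(f)(z)$ forces $\lambda_d=0$ for all even $d$, i.e.\ $f$ is odd. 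Applying this with $f = g_{d_0,D}$: if $d_0+|D|$ is even the generating series changes sign under $z\mapsto 1/z$, hence $g_{d_0,D}$ is even; if $d_0+|D|$ is odd the generating series is invariant, hence $g_{d_0,D}$ is odd. In both cases $g_{d_0,D}(-N) = (-1)^{d_0+|D|}g_{d_0,D}(N)$, completing the argument. The only mildly delicate point is the bookkeeping at $d=0$, which is precisely what the $\frac12$-shift in $\widetilde{\Li}_0$ is designed to absorb; everything else is a routine unwinding of the definitions.
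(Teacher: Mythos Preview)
Your proof is correct and follows essentially the same route as the paper: reduce to monomials, form the generating series as a product of polylogarithms, use the symmetry under $z\mapsto 1/z$ from~\eqref{eq:map z to 1/z}, and conclude via the linear independence of the $\Li_{-d}$. The only difference is cosmetic: the paper treats the cases $d_0\ge 1$ (where it invokes Lemma~\ref{lemma:parity} directly) and $d_0=0$ (where it computes the generating series as $\tfrac{1}{2}\tfrac{1+z}{1-z}\prod_i\Li_{-d_i-1}(z)$) separately, whereas your shifted function $\widetilde{\Li}_0(z)=\tfrac{1+z}{2(1-z)}$ is precisely this factor and lets you run both cases at once.
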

\begin{proof}
By linearity, it is sufficient to prove the lemma when $P$ is a monomial:
$$
P=\left(\prod_{i=1}^n a_i^{r_i}\right)b^d\left(\prod_{i=1}^kb_i^{d_i}\right).
$$
If $d\ge 1$, then the lemma follows from Lemma~\ref{lemma:parity}. Suppose $d=0$. Then we proceed in the same way as in the proof of Lemma~\ref{lemma:parity}. Let $D:=\sum_{i=1}^k d_i$. We have 
$$
\frac{S_{b,B}[P](A,N)}{\prod a_i^{r_i}}=\sum_{j=1}^{D+2k}c_j N^j,\quad c_j\in\mathbb Q.
$$
Therefore,
$$
\frac{1}{\prod a_i^{r_i}}\sum_{N\ge 0}S_{b,B}[P](A,N)z^N=\sum_{j=1}^{D+2k}c_j\Li_{-j}(z).
$$
From the definition of $S_{b,B}[P](A,N)$ it follows that
\begin{gather}\label{eq:formula with S_P}
\frac{1}{\prod a_i^{r_i}}\sum_{N\ge 0}S_{b,B}[P](A,N)z^N=\frac{1}{2}\frac{1+z}{1-z}\prod_{i=1}^k\Li_{-d_i-1}(z).
\end{gather}
Equation~\eqref{eq:map z to 1/z} implies that the right-hand side of~\eqref{eq:formula with S_P} is multiplied by $(-1)^{D+1}$ under the map $z\mapsto\frac{1}{z}$. Therefore, the coefficient~$c_j$ is zero unless $j=D\ (\mathrm{mod}\ 2)$. The lemma is proved. 
\end{proof}

\subsection{Products of derivatives of $\delta_+(x)$}\label{subsection:products}

Recall that 
$$
\delta^{(s)}_+(x)=\sum_{k\ge 0}(ik)^s e^{ikx},\quad s\ge 0.
$$
\begin{lemma}\label{lemma:products of derivatives}
Suppose $n\ge 1$ and $a_1,\ldots,a_n\ge 1$. The product $\prod_{i=1}^n\delta_+^{(a_i)}(x)$ can be expressed in the following way:
$$
\prod_{i=1}^n\delta_+^{(a_i)}(x)=(-i)^{n-1}\sum_{j=1}^{n-1+\sum a_i}C^{a_1,\ldots,a_n}_j\delta_+^{(j)},
$$
where 
\begin{gather}\label{eq:relation of coefficients}
C_j^{a_1,\ldots,a_n}=
\begin{cases}
(-1)^{\frac{n-1+\sum a_i-j}{2}}\tC_j^{a_1,\ldots,a_n},&\text{if $j=n-1+\sum_{i=1}^n a_i\ (\mathrm{mod}\ 2)$},\\
0,&\text{otherwise}.
\end{cases}
\end{gather}
\end{lemma}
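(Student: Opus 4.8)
The plan is to expand both sides as formal Fourier series in $x$ and match Fourier coefficients, converting everything into the sums-of-powers polynomials $\tC^{a_1,\ldots,a_n}(N)$ studied in Section~\ref{subsection:sums of powers}.

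First I would expand the left-hand side straight from the definition $\delta_+^{(a)}(x)=\sum_{k\ge 0}(ik)^a e^{ikx}$, obtaining
$$
\prod_{i=1}^n\delta_+^{(a_i)}(x)=\sum_{k_1,\ldots,k_n\ge 0}\ \prod_{i=1}^n(ik_i)^{a_i}\,e^{i(k_1+\cdots+k_n)x}=i^{\sum a_i}\sum_{N\ge 0}\Bigg(\sum_{\substack{k_1,\ldots,k_n\ge 0\\ k_1+\cdots+k_n=N}}\prod_{i=1}^n k_i^{a_i}\Bigg)e^{iNx}.
$$
The inner sum is, by definition, $\tC^{a_1,\ldots,a_n}(N)$, which by Lemma~\ref{lemma:sums of powers} is a polynomial in $N$ of degree $n-1+\sum a_i$, and whose constant term $\tC^{a_1,\ldots,a_n}_0$ vanishes since every $a_i\ge 1$. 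Writing $\tC^{a_1,\ldots,a_n}(N)=\sum_{j=1}^{n-1+\sum a_i}\tC^{a_1,\ldots,a_n}_j N^j$ and using that $\delta_+^{(j)}(x)=i^j\sum_{N\ge 0}N^j e^{iNx}$, i.e. $\sum_{N\ge 0}N^j e^{iNx}=(-i)^j\delta_+^{(j)}(x)$, I would get
$$
\prod_{i=1}^n\delta_+^{(a_i)}(x)=\sum_{j=1}^{n-1+\sum a_i} i^{\sum a_i}(-i)^j\,\tC^{a_1,\ldots,a_n}_j\,\delta_+^{(j)}(x).
$$

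To match the normalization in the statement I factor out $(-i)^{n-1}$, so that $C_j^{a_1,\ldots,a_n}=i^{\sum a_i}(-i)^{j-n+1}\tC^{a_1,\ldots,a_n}_j$. By Lemma~\ref{lemma:parity}, $\tC^{a_1,\ldots,a_n}_j=0$ unless $j\equiv n-1+\sum_{i=1}^n a_i\pmod 2$, which immediately gives the ``otherwise'' branch of \eqref{eq:relation of coefficients}. In the remaining case $j\equiv n-1+\sum a_i\pmod 2$ I would simplify the scalar $i^{\sum a_i}(-i)^{j-n+1}$: writing it as $(-1)^{j-n+1}\,i^{\sum a_i+j-n+1}$ and noting that the exponent $\sum a_i+j-n+1$ is even (since $j-n+1\equiv\sum a_i\pmod 2$), a direct evaluation of $i^{2m}=(-1)^m$ yields $i^{\sum a_i}(-i)^{j-n+1}=(-1)^{\frac{n-1+\sum a_i-j}{2}}$, which is exactly the first branch of \eqref{eq:relation of coefficients}.

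Essentially nothing here is hard: the computation is a formal rearrangement of Fourier series plus two applications of the lemmas of Section~\ref{subsection:sums of powers}. The only delicate point is the last step — the parity and sign bookkeeping that turns $i^{\sum a_i}(-i)^{j-n+1}$ into $(-1)^{(n-1+\sum a_i-j)/2}$ — and I would make sure to use the parity constraint from Lemma~\ref{lemma:parity} \emph{before} attempting to extract a real sign, since the expression is genuinely a fourth root of unity and only collapses to $\pm 1$ on the support of $\tC^{a_1,\ldots,a_n}_j$.
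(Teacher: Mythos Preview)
Your proof is correct and follows essentially the same route as the paper's: expand the product as a Fourier series to recognize the coefficient of $e^{iNx}$ as $i^{\sum a_i}\tC^{a_1,\ldots,a_n}(N)$, rewrite via~\eqref{eq:C-polynomial} in terms of $\delta_+^{(j)}$, factor out $(-i)^{n-1}$, and use Lemma~\ref{lemma:parity} to reduce the remaining power of $i$ to the real sign $(-1)^{(n-1+\sum a_i-j)/2}$. The only cosmetic difference is that the paper simplifies the scalar as $i^{\sum a_i-j}=(-i)^{n-1}i^{n-1+\sum a_i-j}$ directly (using $(-i)=i^{-1}$), which makes the last sign extraction a one-liner.
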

\begin{proof}
We have
\begin{multline*}
\prod_{i=1}^n\delta_+^{(a_i)}(x)=i^{\sum a_i}\sum_{N\ge 0}\tC^{a_1,\ldots,a_n}(N)e^{iNx}\stackrel{\text{by eq.~\eqref{eq:C-polynomial}}}{=}\sum_{j=1}^{n-1+\sum a_i}i^{\sum a_i-j}\tC^{a_1,\ldots,a_n}_j\delta^{(j)}_+(x)=\\
=(-i)^{n-1}\sum_{j=1}^{n-1+\sum a_i}i^{n-1+\sum a_i-j}\tC^{a_1,\ldots,a_n}_j\delta^{(j)}_+(x).
\end{multline*}
By Lemma~\ref{lemma:parity}, the coefficient $\tC^{a_1,\ldots,a_n}_j$ can be non-zero only if $j=n-1+\sum_{i=1}^n a_i\ (\mathrm{mod}\ 2)$. The lemma is proved.
\end{proof}


\section{Polynomiality of tautological integrals}\label{section:polynomiality}

In this section we prove that the integral of an arbitrary tautological class over the double ramification cycle $DR_g(a_1,\ldots,a_n)$ is a polynomial in the ramification multiplicities $a_1,\ldots,a_n$. In Section~\ref{subsection:tautological classes} we briefly recall the notion of the tautological ring in the cohomology of the moduli space of curves. Section~\ref{subsection:polynomiality} is devoted to the proof of the polynomiality statement.

\subsection{Tautological classes}\label{subsection:tautological classes}

The system of tautological rings is defined to be the set of the smallest $\mbC$-subalgebras of the cohomology rings,
$$
RH^*(\oM_{g,n})\subset H^*(\oM_{g,n};\mbC),
$$
satisfying the following two properties:
\begin{enumerate}

\item The system is closed under push-forward via all maps forgetting markings:
$$
\pi_*\colon RH^*(\oM_{g,n})\to RH^*(\oM_{g,n-1}).
$$

\item The system is closed under push-forward via all gluing maps:
\begin{align*}
&gl_*\colon RH^*(\oM_{g_1,n_1+1})\otimes_{\mbC}RH^*(\oM_{g_2,n_2+1})\to RH^*(\oM_{g_1+g_2,n_1+n_2}),\\
&gl_*\colon RH^*(\oM_{g-1,n+2})\to RH^*(\oM_{g,n}).
\end{align*}
\end{enumerate}
While the definition appears restrictive, natural algebraic constructions typically yield cohomology classes lying in the tautological ring. For example, boundary classes, the standard $\psi$, $\kappa$, and $\lambda$ classes all lie in the tautological ring.

The tautological ring $RH^*(\oM_{g,n})$ can be also described as an image of a certain finite-dimensional $\mbC$-algebra $\mcS_{g,n}$, called the strata algebra. We recommend the reader the paper \cite[Sections 0.2, 0.3]{PPZ15} as a very good introduction in this subject. 


\subsection{Polinomiality of integrals}\label{subsection:polynomiality}

Recall that the double ramification cycle $DR_g(a_1,\ldots,a_n)$ is defined for arbitrary integers $a_1,\ldots,a_n$ such that $a_1+\ldots+a_n=0$. We remind the reader the following two properties:
\begin{align}
&DR_g(-a_1,\ldots,-a_n)=DR_g(a_1,\ldots,a_n),\label{eq:sign change}\\
&DR_g(0,\ldots,0)=(-1)^g\lambda_g.\notag
\end{align}

\begin{proposition}
Let $n\ge 1$. For an arbitrary tautological class $\alpha\in RH^*(\oM_{g,n})$ the integral
$$
\int_{DR_g(a_1,\ldots,a_n)}\alpha
$$
is a polynomial in~$a_1,\ldots,a_n$. Moreover, this polynomial is even and its degree is less or equal to~$2g$. 
\end{proposition}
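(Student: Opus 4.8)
The plan is to reduce the statement to a known result about double ramification cycles via Hain's formula (or its generalization on the full moduli space, due to Janda--Pandharipande--Pixton--Zvonkine) and then to propagate polynomiality through the tautological ring using the strata algebra. First I would recall that on the moduli space of curves of compact type the class $DR_g(a_1,\ldots,a_n)$ is given by Hain's formula as a polynomial of degree $g$ in a divisor class whose coefficients are quadratic in the $a_i$; hence $DR_g(a_1,\ldots,a_n)$ itself is a polynomial of degree $2g$ in $a_1,\ldots,a_n$ with values in $H^{2g}(\oM_{g,n};\mbQ)$. On the full Deligne--Mumford space one invokes the Pixton/JPPZ formula, which likewise expresses $DR_g(a_1,\ldots,a_n)$ as an explicitly tautological class that is polynomial of degree $2g$ in the $a_i$. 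In either case, the cap product $DR_g(a_1,\ldots,a_n)\cap\alpha$ for a fixed tautological $\alpha$ is then polynomial of degree $\le 2g$ in the $a_i$ with values in $RH^*(\oM_{g,n})$, and integrating (taking the degree-zero part of the pushforward to a point) gives a polynomial of degree $\le 2g$ in $a_1,\ldots,a_n$.

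The evenness is then immediate: by property~\eqref{eq:sign change} we have $\int_{DR_g(-a_1,\ldots,-a_n)}\alpha=\int_{DR_g(a_1,\ldots,a_n)}\alpha$, so the polynomial is invariant under $a_i\mapsto -a_i$ for all $i$ simultaneously, and since we already know it has degree $\le 2g$, each monomial appearing must have even total degree. (Strictly, one needs the sign-change invariance together with the fact that the polynomial depends only on the $a_i$ up to simultaneous sign; this is exactly \eqref{eq:sign change}, which holds for all integer vectors with $\sum a_i=0$, hence forces the polynomial identity.)

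The step I expect to be the main obstacle is making the degree bound $2g$ and the polynomiality completely rigorous \emph{on the full moduli space $\oM_{g,n}$}, not merely on the compact-type locus. Hain's formula only holds on $\oM_{g,n}^{ct}$, and the honest extension to $\oM_{g,n}$ requires either the $\ell$-th power trick of Janda--Pandharipande--Pixton--Zvonkine together with their comparison with Pixton's formula, or a direct argument. The cleanest route is to cite \cite{JPPZ}: Pixton's class $\mathsf{P}_g^{d,\ell}(a_1,\ldots,a_n)$ is manifestly a polynomial in the $a_i$ (for $a_i$ in a fixed residue class mod $\ell$), the degree-$g$ part equals $2^{-g}DR_g(a_1,\ldots,a_n)$, and polynomiality in the $a_i$ of the resulting class is built into the construction; one then notes that the polynomial, being equal for infinitely many $\ell$ on overlapping arithmetic progressions, is a genuine polynomial on the whole lattice $\{\sum a_i=0\}$, and tracking degrees in Pixton's formula gives the bound $2g$.

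Concretely, the order of steps I would carry out is: (1) recall the strata algebra $\mcS_{g,n}$ and reduce to showing that $\int_{DR_g(\mathbf a)}\alpha$ is polynomial for $\alpha$ a monomial in $\psi$, $\kappa$ classes decorated on a fixed stratum, using that push-forward and gluing commute with intersecting against $DR_g$ and preserve polynomiality of coefficients; (2) invoke Pixton's formula / \cite{JPPZ} to write $DR_g(\mathbf a)$ as a polynomial of degree $\le 2g$ in $a_1,\ldots,a_n$ with tautological coefficients; (3) cap with $\alpha$, integrate, and conclude polynomiality of degree $\le 2g$; (4) apply \eqref{eq:sign change} to deduce evenness. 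The auxiliary Lemma~\ref{lemma:lemma for S} on sums of powers is what one would use if instead one wanted a self-contained combinatorial proof tracking the $a_i$-dependence through the boundary pushforwards in Pixton's formula directly, and I would fall back on it to nail down both the polynomiality of each summand in that formula and its parity.
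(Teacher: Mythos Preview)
Your approach is correct but genuinely different from the paper's. You rely on the Janda--Pandharipande--Pixton--Zvonkine theorem that $DR_g(a_1,\ldots,a_n)$ equals Pixton's class, which is manifestly a tautological-class-valued polynomial of degree $\le 2g$ in the $a_i$; once that is in hand, capping with a fixed~$\alpha$ and integrating is immediate, and evenness follows from~\eqref{eq:sign change} exactly as you say. The paper, by contrast, does \emph{not} assume Pixton's formula (which was still conjectural when the paper was written---see the discussion after~\eqref{density}). Instead it proceeds by induction on the number of edges of the stable graph underlying a strata-algebra generator~$[\Gamma,\gamma]$: the edgeless base case reduces via $\kappa$--$\psi$ relations to integrals of $\psi$-monomials over $DR_g$, handled by~\cite{BSSZ12}; for the inductive step one cuts an edge, uses the splitting of $gl^*DR_g$ along a separating node (trivial) or Zvonkine's explicit pullback formula~\eqref{eq:pullback} along a non-separating node, and then invokes Lemma~\ref{lemma:lemma for S} to control parity and degree of the resulting sums over internal parts. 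So the paper's argument is self-contained modulo~\cite{BSSZ12} and the (unpublished) pullback formula, whereas yours outsources the hard geometry to a single deep citation. Your route is shorter and conceptually cleaner when the Pixton/JPPZ result is available; the paper's route gives an independent proof and actually uses Lemma~\ref{lemma:lemma for S} in an essential way, not as a fallback. One small point: your step~(1) about reducing via the strata algebra is unnecessary in your own approach---once $DR_g(\mathbf a)$ is a polynomial with tautological coefficients, integrating against \emph{any} fixed class is automatically polynomial; the strata-algebra reduction is precisely what the paper needs to set up its edge-induction.
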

\begin{proof}
There is nothing to prove for $n=1$. Suppose $n\ge 2$. We will use the notations from \cite[Sections 0.2, 0.3]{PPZ15}. The space $\mcS_{g,n}$ has a basis whose elements are the isomorphism classes of pairs $[\Gamma,\gamma]$, where
$$
\Gamma=(V,H,L,g\colon V\to\mbZ_{\ge 0},v\colon H\to V,\iota\colon H\to H)
$$ 
is a stable graph of genus $g$ with $n$ legs and 
\begin{gather}\label{eq:gamma}
\gamma=\prod_{v\in V}\prod_{i\ge 1}\kappa_i[v]^{x_i[v]}\cdot\prod_{h\in H}\psi_h^{y[h]}\in H^*(\oM_{\Gamma};\mbC)
\end{gather}
is a basic class on 
$$
\oM_{\Gamma}:=\prod_{v\in V}\oM_{g(v),n(v)}.
$$
There is a canonical morphism 
$$
\xi_\Gamma\colon\oM_{\Gamma}\to\oM_{g,n}.
$$
The map $q\colon\mcS_{g,n}\to H^*(\oM_{g,n})$ is defined by
$$
q([\Gamma,\gamma]):=\xi_{\Gamma*}(\gamma).
$$
The tautological ring $RH^*(\oM_{g,n})$ coincides with the image of $q$.

Without loss of generality we can assume that $\alpha=q([\Gamma,\gamma])$. We proceed by induction on the number of edges in the graph~$\Gamma$. Suppose it has no edges, then $\oM_\Gamma=\oM_{g,n}$ and
\begin{gather}\label{eq:alpha}
\alpha=\gamma=\prod_{i\ge 1}\kappa_i^{x_i}\cdot\prod_{i=1}^n\psi_i^{y_i}.
\end{gather}
It is well-known (see e.g.~\cite[Section 2.1]{Ion02}) that the class~\eqref{eq:alpha} can be expressed as a linear combination of classes of the form
$$
\pi_{m*}\left(\prod_{i=1}^{n+m}\psi_i^{d_i}\right),
$$
for some $m$'s, where $\pi_m\colon\oM_{g,n+m}\to\oM_{g,n}$ is the map that forgets the last $m$ marked points.
We have
$$
\int_{DR_g(a_1,\ldots,a_n)}\pi_{m*}\left(\prod_{i=1}^{n+m}\psi_i^{d_i}\right)=\int_{DR_g(a_1,\ldots,a_n,0,\ldots,0)}\prod_{i=1}^{n+m}\psi_i^{d_i}.
$$
In \cite{BSSZ12} it is proved that an integral of an arbitrary monomial in psi-classes over the double ramification cycle $DR_g(b_1,\ldots,b_l)$ is an even polynomial in $b_1,\ldots,b_l$ of degree not more than $2g$. Therefore, the integral $\int_{DR_g(a_1,\ldots,a_n)}\alpha$ is an even polynomial in $a_1,\ldots,a_n$ of degree not greater than~$2g$.

Suppose the graph $\Gamma$ has $l$ edges and $l\ge 1$. Let us choose some edge $e$. Suppose $e$ is separating, then if we cut it in two half-edges, then the graph $\Gamma$ will become a disjoint union of two stable graphs. Denote them by~$\Gamma_1$ and~$\Gamma_2$. Let $V_i, H_i$ and $L_i$ be the set of vertices, the set of half-edges and the set of legs of the graph $\Gamma_i$ correspondingly. Denote by $g_i$ the genus of the graph $\Gamma_i$. Let 
$$
\gamma_k:=\prod_{v\in V_k}\prod_{i\ge 1}\kappa_i[v]^{x_i[v]}\cdot\prod_{h\in H_k}\psi_h^{y[h]}\in H^*(\oM_{\Gamma_k};\mbC),\quad k=1,2.
$$
Denote by $gl\colon\oM_{g_1,|L_1|+1}\times\oM_{g_2,|L_2|+1}\to\oM_{g,n}$ the gluing morphism. We see that
$$
q([\Gamma,\gamma])=\xi_{\Gamma*}(\gamma)=gl_*\left(\xi_{\Gamma_1*}(\gamma_1)\otimes\xi_{\Gamma_2*}(\gamma_2)\right).
$$
Recall that for a subset $I=\{i_1,i_2,\ldots\},i_1<i_2<\ldots$, of $\{1,2,\ldots,n\}$ we use the following notations:
\begin{gather*}
A_I=(a_{i_1},a_{i_2},\ldots),\qquad a_I=\sum_{i\in I}a_i.
\end{gather*}
We have (see \cite{BSSZ12}) 
$$
gl^* DR_g(a_1,\ldots,a_n)=DR_{g_1}(A_{L_1},-a_{L_1})\otimes DR_{g_2}(A_{L_2},-a_{L_2}).
$$
Thus, we get
\begin{gather}\label{eq:factors}
\int_{DR_g(a_1,\ldots,a_n)}q([\Gamma,\gamma])=\left(\int_{DR_{g_1}(A_{L_1},-a_{L_1})}\xi_{\Gamma_1*}(\gamma_1)\right)\left(\int_{DR_{g_2}(A_{L_2},-a_{L_2})}\xi_{\Gamma_2*}(\gamma_2)\right).
\end{gather}
The number of edges in each graph $\Gamma_i$ is less than $l$. Therefore, by induction assumption, both factors on the right-hand side of~\eqref{eq:factors} are even polynomials in~$a_i$'s of degrees not more than~$2g_1$ and~$2g_2$ correspondingly. Therefore, the integral on the right-hand side of~\eqref{eq:factors} is an even polynomial in $a_i$'s of degree not more than $2(g_1+g_2)=2g$.  

Suppose that the edge~$e$ is non-separating. Denote by $\tGamma$ the stable graph obtained from $\Gamma$ by cutting the edge~$e$ in two half-edges. We mark two new legs by~$n+1$ and~$n+2$ correspondingly. The space~$\oM_{\tGamma}$ is naturally isomorphic to~$\oM_{\Gamma}$. Let $gl\colon\oM_{g-1,n+2}\to\oM_{g,n}$ be the gluing morphism. We can decompose the map $\xi_\Gamma\colon\oM_{\Gamma}\to\oM_{g,n}$ in the following way:
\begin{center}
\begin{tikzcd}
\oM_{\Gamma}\arrow{r}{\sim}\arrow[bend right]{rrr}{\xi_\Gamma} & \oM_{\tGamma}\arrow{r}{\xi_{\tGamma}} &\oM_{g-1,n+2}\arrow{r}{gl} & \oM_{g,n}
\end{tikzcd}
\end{center}
Therefore, we have
$$
q([\Gamma,\gamma])=\xi_{\Gamma*}(\gamma)=gl_*(\xi_{\tGamma*}(\tgamma)),
$$
where $\tgamma$ is the cohomology class on $\oM_{\tGamma}$ induced from $\gamma$ by the isomorphism $\oM_{\tGamma}\stackrel{\sim}{\to}\oM_\Gamma$. Therefore, 
\begin{gather}\label{eq:second integral}
\int_{DR_g(a_1,\ldots,a_n)}q([\Gamma,\gamma])=\int_{gl^*(DR_g(a_1,\ldots,a_n))}\xi_{\tGamma*}(\tgamma).
\end{gather}
Denote by $[n]$ the set $\{1,2,\ldots,n\}$. We have (\cite{Zvo})
\begin{align}
gl^*DR_g&(a_1,\ldots,a_n)=\label{eq:pullback}\\
=&\sum_{\substack{I\sqcup J=[n]\\a_I>0}}\sum_{k\ge 1}\sum_{\substack{g_1,g_2\ge 0\\g_1+g_2+k=g}}\sum_{\substack{B=(b_1,\ldots,b_k)\in\mbZ_{\ge 1}^k\\b\ge 1,\,b+\sum b_i=a_I}}\frac{\prod_{i=1}^kb_i}{k!}DR_{g_1}(A_I,\stackrel{n+1}{\boxed{-b}},-B)\boxtimes DR_{g_2}(A_J,\stackrel{n+2}{\boxed{b}},B)+\label{eq:1}\\
&+\sum_{\substack{I\sqcup J=[n]\\a_I>0}}\sum_{k\ge 1}\sum_{\substack{g_1,g_2\ge 0\\g_1+g_2+k=g}}\sum_{\substack{B=(b_1,\ldots,b_k)\in\mbZ_{\ge 1}^k\\b\ge 1,\,b+\sum b_i=a_I}}\frac{\prod_{i=1}^k b_i}{k!}DR_{g_1}(A_I,\stackrel{n+2}{\boxed{-b}},-B)\boxtimes DR_{g_2}(A_J,\stackrel{n+1}{\boxed{b}},B)+\label{eq:2}\\
&+\frac{1}{2}\sum_{\substack{I\sqcup J=[n]\\a_I>0}}\sum_{k\ge 1}\sum_{\substack{g_1,g_2\ge 0\\g_1+g_2+k=g}}\sum_{\substack{B=(b_1,\ldots,b_k)\in\mbZ_{\ge 1}^k\\\sum b_i=a_I}}\frac{\prod_{i=1}^k b_i}{k!}DR_{g_1}(A_I,\stackrel{n+1}{\boxed{0}},-B)\boxtimes DR_{g_2}(A_J,\stackrel{n+2}{\boxed{0}},B)+\label{eq:3}\\
&+\frac{1}{2}\sum_{\substack{I\sqcup J=[n]\\a_I>0}}\sum_{k\ge 1}\sum_{\substack{g_1,g_2\ge 0\\g_1+g_2+k=g}}\sum_{\substack{B=(b_1,\ldots,b_k)\in\mbZ_{\ge 1}^k\\\sum b_i=a_I}}\frac{\prod_{i=1}^k b_i}{k!}DR_{g_1}(A_I,\stackrel{n+2}{\boxed{0}},-B)\boxtimes DR_{g_2}(A_J,\stackrel{n+1}{\boxed{0}},B).\label{eq:4}
\end{align}
Here the notation $\stackrel{n+1}{\boxed{-b}}$ means that the point marked by $n+1$ has ramification multiplicity $-b$. We are grateful to D.~Zvonkine for informing us about this formula.

Consider an arbitrary decomposition $I\sqcup J=[n]$ and numbers $g_1,g_2\ge 0$, $k\ge 1$, such that $g_1+g_2+k=g$. Denote by $\Phi_{g_1,g_2,k,I,J}$ the stable graph described as follows:
\begin{itemize}
\item It has two vertices with $k$ edges between them; 
\item The genus of the first vertex is $g_1$ and the genus of the second one is $g_2$;
\item The first vertex contains the legs marked by $I$ and also the leg marked by $n+1$. The second vertex contains the legs marked by $J$ and also the leg marked by $n+2$.
\end{itemize} 
We see that each term on the right-hand side of formula~\eqref{eq:pullback} has the form
$$
\xi_{\Phi_{g_1,g_2,k,I,J}*}(\beta_1\otimes\beta_2),
$$
where $\beta_1$ and $\beta_2$ are double ramification cycles on~$\oM_{g_1,|I|+k+1}$ and~$\oM_{g_2,|J|+k+1}$ correspondingly. From formulas in~\cite[Section 0.3]{PPZ15} it follows that the class 
$$
\xi^*_{\Phi_{g_1,g_2,k,I,J}}\left(\xi_{\tGamma*}(\tgamma)\right)\in H^*(\oM_{\Phi_{g_1,g_2,k,I,J}};\mbC)
$$
in the cohomology of 
$$
\oM_{\Phi_{g_1,g_2,k,I,J}}=\oM_{g_1,|I|+k+1}\times\oM_{g_2,|J|+k+1}
$$
can be expressed in the following way:
$$
\xi^*_{\Phi_{g_1,g_2,k,I,J}}\left(\xi_{\tGamma *}(\tgamma)\right)=\sum_j e_j\left(\xi_{\tGamma_{j,1}*}(\tgamma_{j,1})\right)\otimes\left(\xi_{\tGamma_{j,2}*}(\tgamma_{j,2})\right),
$$
where $\tGamma_{j,i}$ are stable graphs, $\tgamma_{j,i}$ are basic classes and $e_j$ are some rational coefficients. Of course,~$\tGamma_{j,i}, \tgamma_{j,i}$ and~$e_j$ depend on $g_1,g_2,k,I,J$, but in order to shorten the exposition of the paper, we omit it in their notations. Most importantly, from the procedure, described in~\cite{PPZ15}, it follows that the number of edges the graphs~$\tGamma_{j,i}$ is not more than the number of edges in the graph $\tGamma$, so it is not more than~$l-1$. From the induction assumption it follows that the function
$$
\frac{1}{k!}\sum_j e_j\left(\int_{DR_{g_1}(A_I,\stackrel{\scriptscriptstyle n+1}{\boxed{\scriptstyle-b}},-B)}\xi_{\tGamma_{j,1}*}\left(\tgamma_{j,1}\right)\right)\left(\int_{DR_{g_2}(A_J,\stackrel{\scriptscriptstyle n+2}{\boxed{\scriptstyle b}},B)}\xi_{\tGamma_{j,2}*}\left(\tgamma_{j,2}\right)\right)
$$
is an even polynomial in $a_1,\ldots,a_n,b,b_1,\ldots,b_k$ of degree not greater than~$2(g_1+g_2)$. Denote it by 
$$
P_{g_1,g_2,k,I,J}(A,b,B).
$$

Let us prove that the integral~\eqref{eq:second integral} is equal to 
\begin{gather}\label{eq:resulting polynomial}
\sum_{I\sqcup J=[n]}\sum_{k\ge 1}\sum_{\substack{g_1,g_2\ge 0\\g_1+g_2+k=g}}S_{b,B}[P_{g_1,g_2,k,I,J}](A,a_I).
\end{gather}
Consider the part of the integral~\eqref{eq:second integral}, that corresponds to the term~\eqref{eq:1} in the expression for $gl^*(DR_g(a_1,\ldots,a_n))$:
\begin{align}
&\sum_{\substack{I\sqcup J=[n]\\a_I>0}}\sum_{k\ge 1}\sum_{\substack{g_1,g_2\ge 0\\g_1+g_2+k=g}}\sum_{\substack{B=(b_1,\ldots,b_k)\in\mbZ_{\ge 1}^k\\b\ge 1,\,b+\sum b_i=a_I}}\frac{\prod_{i=1}^k b_i}{k!}\int_{DR_{g_1}(A_I,\stackrel{\scriptscriptstyle n+1}{\boxed{\scriptstyle-b}},-B)\boxtimes DR_{g_2}(A_J,\stackrel{\scriptscriptstyle n+2}{\boxed{\scriptstyle b}},B)}\xi_{\tGamma*}(\tgamma)=\notag\\
=&\sum_{\substack{I\sqcup J=[n]\\a_I>0}}\sum_{k\ge 1}\sum_{\substack{g_1,g_2\ge 0\\g_1+g_2+k=g}}\sum_{\substack{B=(b_1,\ldots,b_k)\in\mbZ_{\ge 1}^k\\b\ge 1,\,b+\sum b_i=a_I}}\frac{\prod_{i=1}^k b_i}{k!}\sum_j e_j\times\notag\\
&\times\left(\int_{DR_{g_1}(A_I,\stackrel{\scriptscriptstyle n+1}{\boxed{\scriptstyle-b}},-B)}\xi_{\tGamma_{j,1}*}(\tgamma_{j,1})\right)\left(\int_{DR_{g_2}(A_J,\stackrel{\scriptscriptstyle n+2}{\boxed{\scriptstyle b}},B)}\xi_{\tGamma_{j,2}*}(\tgamma_{j,2})\right)=\notag\\
=&\sum_{\substack{I\sqcup J=[n]\\a_I>0}}\sum_{k\ge 1}\sum_{\substack{g_1,g_2\ge 0\\g_1+g_2+k=g}}\sum_{\substack{B=(b_1,\ldots,b_k)\in\mbZ_{\ge 1}^k\\b\ge 1,\,b+\sum b_i=a_I}}\left(\prod_{i=1}^k b_i\right) P_{g_1,g_2,k,I,J}(A,b,B).\label{eq:comp}
\end{align}
Doing the same computation with the term~\eqref{eq:3} and adding it to~\eqref{eq:comp}, we get
\begin{gather}\label{eq:first summand}
\sum_{\substack{I\sqcup J=[n]\\a_I>0}}\sum_{k\ge 1}\sum_{\substack{g_1,g_2\ge 0\\g_1+g_2+k=g}}S_{b,B}[P_{g_1,g_2,k,I,J}](A,a_I).
\end{gather}
Let us consider the part of the integral~\eqref{eq:second integral}, that corresponds to the term~\eqref{eq:2}. We obtain:
\begin{align}
&\sum_{\substack{I\sqcup J=[n]\\a_I<0}}\sum_{k\ge 1}\sum_{\substack{g_1,g_2\ge 0\\g_1+g_2+k=g}}\sum_{\substack{B=(b_1,\ldots,b_k)\in\mbZ_{\ge 1}^k\\b\ge 1,\,b+\sum b_i=-a_I}}\frac{\prod_{i=1}^k b_i}{k!}\sum_j e_j\times\notag\\
&\times\left(\int_{DR_{g_1}(A_I,\stackrel{\scriptscriptstyle n+1}{\boxed{\scriptstyle b}},B)}\xi_{\tGamma_{j,1}*}(\tgamma_{j,1})\right)\left(\int_{DR_{g_2}(A_J,\stackrel{\scriptscriptstyle n+2}{\boxed{\scriptstyle -b}},-B)}\xi_{\tGamma_{j,2}*}(\tgamma_{j,2})\right)\stackrel{\text{by eq. \eqref{eq:sign change}}}{=}\notag\\
=&\sum_{\substack{I\sqcup J=[n]\\a_I<0}}\sum_{k\ge 1}\sum_{\substack{g_1,g_2\ge 0\\g_1+g_2+k=g}}\sum_{\substack{B=(b_1,\ldots,b_k)\in\mbZ_{\ge 1}^k\\b\ge 1,\,b+\sum b_i=-a_I}}\frac{\prod_{i=1}^k b_i}{k!}\sum_j e_j\times\notag\\
&\times\left(\int_{DR_{g_1}(-A_I,\stackrel{\scriptscriptstyle n+1}{\boxed{\scriptstyle -b}},-B)}\xi_{\tGamma_{j,1}*}(\tgamma_{j,1})\right)\left(\int_{DR_{g_2}(-A_J,\stackrel{\scriptscriptstyle n+2}{\boxed{\scriptstyle b}},B)}\xi_{\tGamma_{j,2}*}(\tgamma_{j,2})\right)=\notag\\
=&\sum_{\substack{I\sqcup J=[n]\\a_I<0}}\sum_{k\ge 1}\sum_{\substack{g_1,g_2\ge 0\\g_1+g_2+k=g}}\sum_{\substack{B=(b_1,\ldots,b_k)\in\mbZ_{\ge 1}^k\\b\ge 1,\,b+\sum b_i=-a_I}}\left(\prod_{i=1}^k b_i\right)P_{g_1,g_2,k,I,J}(-A,b,B).\label{eq:comp-}
\end{align}
Doing the same computation for~\eqref{eq:4} and adding it to~\eqref{eq:comp-} we get
\begin{gather*}
\sum_{\substack{I\sqcup J=[n]\\a_I<0}}\sum_{k\ge 1}\sum_{\substack{g_1,g_2\ge 0\\g_1+g_2+k=g}}S_{b,B}[P_{g_1,g_2,k,I,J}](-A,-a_I)\stackrel{\text{by Lemma~\ref{lemma:lemma for S}}}{=}\sum_{\substack{I\sqcup J=[n]\\a_I<0}}\sum_{k\ge 1}\sum_{\substack{g_1,g_2\ge 0\\g_1+g_2+k=g}}S_{b,B}[P_{g_1,g_2,k,I,J}](A,a_I).
\end{gather*}
Summing the last expression with~\eqref{eq:first summand} and using~\eqref{eq:zero of S}, we get~\eqref{eq:resulting polynomial}. By Lemma~\ref{lemma:lemma for S}, the sum~\eqref{eq:resulting polynomial} is an even polynomial in $a_1,\ldots,a_n$ of degree at most~$2g$. The proposition is proved.
\end{proof}

}

\end{document}